\newtheorem{theorem}{Theorem}[section]
\newtheorem{lemma}[theorem]{Lemma}
\newcommand{\bo}[1]{\bm{#1}} 
\newcommand{\ba}{\begin{array}{rcll}} 
\newcommand{\ea}{\end{array}}
\newcommand{\ben}{\begin{enumerate}}
\newcommand{\bena}{\begin{enumerate}[(a)]}
\newcommand{\beni}{\begin{enumerate}[i)]}
\newcommand{\een}{\end{enumerate}}
\newcommand{\jen}[1]{{\color{blue}[\textbf{Jen:} #1]}}
\newcommand{\gui}[1]{{\color{green!50!black}[\textbf{Gui:} #1]}}
\newcommand{\vinayak}[1]{{\color{red}[\textbf{Vinayak:} #1]}}
\newcommand{\model}{\emph{Joint SBM}}
\newcommand{\isomodel}{\emph{Isolated SBM}}
\newcommand{\jspec}{\emph{JointSpec}}
\newcommand{\isospec}{\emph{IsoSpec}}
\newcommand{\stacked}[1]{\overline{\overline{#1}}}
\newcommand{\diag}{\text{diag}}
\DeclareMathOperator*{\argmin}{arg\,min}
\title{Community detection over a heterogeneous population of non-aligned networks}
\author{
  Guilherme Gomes$^1$, Vinayak Rao$^{1,2}$, Jennifer Neville$^{1,2}$ \\
$^1$Department of Statistics, Purdue University\\ 
$^2$Department of Computer Science, Purdue University \\
  \texttt{\{gomesg, varao, neville\}@purdue.edu} \\
}
\begin{document}
\maketitle
\vspace{-7mm}
\begin{abstract}
Clustering and community detection with multiple graphs have typically focused on {\em aligned} graphs, where there is a mapping between nodes across the graphs (e.g., multi-view, multi-layer, temporal graphs). However, there are numerous application areas with multiple graphs that are only partially aligned, or even unaligned. These graphs are often drawn from the same population, with communities of potentially different sizes that exhibit similar structure. In this paper, we develop a joint stochastic blockmodel (\model) to estimate shared communities across sets of heterogeneous non-aligned graphs. We derive an efficient spectral clustering approach to learn the parameters of the joint SBM\footnote{Code available at \href{https://github.com/kurtmaia/JointSBM}{github.com/kurtmaia/JointSBM}.}. We evaluate the model on both synthetic and real-world datasets and show that the joint model is able to exploit cross-graph information to better estimate the communities compared to learning separate SBMs on each individual graph.
\end{abstract}

\vspace{-3mm}
\section{Introduction}

Interest in the statistical analysis of network data has  grown rapidly over the last few years, with applications such as link prediction \citep{liben2007link,wang2015link}, collaborative filtering \citep{sarwar2001item,linden2003amazon,koren2015advances},
community detection \citep{fortunato2010community,aicher2014learning,newman2016community} etc.
In the area of community detection, most work has focused on single large graphs. Efforts to move beyond this to consider scenarios with multiple graphs, such as multi-view clustering~\citep{cozzo2013clustering}, multi-layer community detection \citep{mucha2010community,zhou2017graph}, and temporal clustering \citep{von2016mobilitygraphs}. However, these efforts assume that the graphs are {\em aligned}, with a known correspondence or mapping between nodes in each graph.  Here, the multiple graphs provide different sources of information about the same set of nodes.  Applications of aligned graphs include graphs evolving over time \citep{peixoto2015inferring, sarkar2012nonparametric,charlin2015dynamic,durante2014nonparametric}; as well as independent observations \citep{ginestet2014,asta2014geometric, durante2015,durante2016}. 

Such methods are not applicable to multiple non-aligned graphs, which occur in   
fields like biology (e.g., brain networks, fungi networks, protein networks), 
social media (e.g., social networks, word co-occurrences), and many others. In these settings, graph instances are often drawn from the same population, with limited, or often no, correspondence between the nodes across graphs (i.e., the nodes have similar behavior but represent different entities). Clustering and community detection methods for this scenario are relatively under-explored. While one could cluster each graph separately, pooling information across graphs can improve estimation, particularly for graphs with sparse connectivity or imbalanced community sizes. 

Here, we focus on the problem of community detection across a set of non-aligned graphs of varying size. 
Formally, we are given a set of $N$ graphs, $\bm{\Omega} = \{\bm{A}_1,\dotsc,\bm{A}_N\}$, where the $n$th graph is represented by its adjacency matrix defined as $\bm{A}_n\in \{0,1\}^{|V_n|\times|V_n|}$ where ${V}_n$ is the set of nodes. Write ${E}_n$ for the edges. We do not require the set $V_n$ to be shared across different graphs, however we assume they belong to a common set of $K$ communities. %
Our goal is to identify the $K$ communities underlying $\bo{\Omega}$. 
For instance, consider villages represented by social graphs where nodes represent individuals in a village and edges 
represent relationships between them. Although the people are different in each village and the sizes of villages vary, personal characteristics may impact the propensity of having some type of relationship, e.g. a younger individual is more willing to connect with other younger individuals,
or an influential person such as a priest %
might be expected to have more ties. 
Indeed, there are a vast number of factors, both observed and latent, shared among people across villages, that 
might influence relationship and  community formation. %

Our approach builds on the popular stochastic block model (SBM). SBMs have been studied extensively for single graph 
domains~\citep{rohe2011spectral,lei2015consistency,sarkar2015role}, and have proven to be highly effective on real 
world problems~\citep{airoldi2008mixed,karrer2011stochastic}. %
A simple approach to apply SBMs in our setting is to separately estimate the SBM parameters per graph, and then attempt to find a correspondence between the estimated probability structures to determine a single {\em global} community structure. We call this process \isomodel\ and show it is only reasonably accurate when each community is well represented in each graph.  Moreover, the process of finding a correspondence among the estimated probabilities has $\mathcal{O}(NK^2)$ complexity, which is only feasible when $N$ is small and $K$ is very small.

We propose a joint SBM model, where the key issue is to estimate a joint connectivity structure among the communities in each graph while allowing the number and sizes of clusters to vary across graphs. To estimate the parameters of the model, we show how the individual graph adjacency matrix eigendecompositions relate to the decomposition of the whole dataset $\bm{\Omega}$, and then derive an efficient spectral clustering approach to optimize the joint model. Our learning algorithm has complexity $\mathcal{O}(|V|K)$ per iteration where $|V|$ is total number of nodes across graphs. Notably, our approach is more efficient than \isomodel\ because it does not need the $\mathcal{O}(NK^2)$ step to determine a correspondence between the separate models across graphs. We evaluate our \model\ on synthetic and real data, comparing the results to the Isolated SBM and an alternative approach based on clustering graph embeddings.  The results show that our joint model is able to more accurately recover the community structure, particularly in scenarios where graphs are highly heterogeneous.
%

%

%

%
\section{Joint SBM for multiple graphs}

\subsection{Single graph SBM}

The stochastic blockmodel (SBM)~\citep{holland1983stochastic, wasserman1987stochastic} can be viewed as a mixture of Erd\"os-R\'enyi graphs~\citep{erdds1959random}, and for a single graph $\bm{A}$
with $K$ communities, is defined by a $(|V_n| \times K)$ membership matrix $\bm{X}$, and a connectivity matrix $\bm{\Theta}  \in [0,1]^{K \times K}$.  
Here $\bm{X}[i,k]=1$ if the $i$-th node belongs to community $k$, and equals $0$ otherwise.  
$\bm{\Theta}[k,l] = \theta_{kl}$ is the probability of an edge between nodes from communities $k$ and $l$. 
Then, a graph represented by adjacency matrix $\bm{A}$ is generated as
\vspace{-7mm}

\begin{align}
a_{ij} \sim
  \text{Bern}(\bm{X}_i\bm{\Theta}\bm{X}_j^{T})  &\quad \text{if $i < j$}.
\label{eq:gen_model_single}
\end{align}

\vspace{-4mm}
Note that $a_{ij} = a_{ji}\ \forall i,j$, and since we do not consider self-edges, $a_{ii} = 0\ \forall i$.
SBMs have traditionally been applied in settings with a single graph. 
Here, theoretical properties like consistency and goodness-of-fit are well understood, and efficient polynomial-time algorithms with theoretical guarantees have been proposed for learning and inference. 
However, there is little work for the situation with multiple observed graphs with shared statistical properties. 

\subsection{Multi-graph joint SBM}
\vspace{-1mm}
To address this, we consider an extension of the SBM in Equation~\eqref{eq:gen_model_single}.
Our model does not require vertices to be aligned across graphs, nor does it require different graphs to have the same number of vertices.
Vertices from all graphs belong to one of shared set of groups, with membership of graph $n$ represented by a membership matrix 
$\bm{X}_{n}$.  
Edge-probabilities between nodes are determined by a {\em global} connectivity matrix $\bm{\Theta}$ shared by all graphs. 
For notational convenience, we will refer to the set of stacked $\bm{X}_{n}$ matrices as the full membership matrix $\bm{X}$, with $\bm{X}_{ni}$ one-hot membership vector of the $i$-th node of graph $n$.  
The overall generative process assuming $K$ blocks/clusters/communities is 

\vspace{-4mm}
\begin{equation}
\begin{aligned}[l]
a_{nij} \sim
  \text{Bern}(\bm{X}_{ni}\bm{\Theta}\bm{X}_{nj}^{T})  &\quad \text{if $i < j$} \\
\end{aligned}
\label{eq:gen_model}
\end{equation}
where again, $a_{nij}$ is $ij$-th cell of the adjacency matrix $\bo{A}_n$.
Note that $\bm{X}_{n}$ is a $(|V_n| \times K)$ binary matrix, and $\bm{\Theta}$ is a ${K \times K}$ matrix of probabilities. 
This model can easily be extended to edges with weights (replacing the Bernoulli distribution with some other distribution) or to include covariates (e.g.\ through another layer of coefficients relating covariates with membership or edge probabilities).

\section{Inference via spectral clustering}
Having specified our model, we provide a spectral clustering algorithm to identify community structure and edge probabilities from graph data $\bo{A}_1, \bo{A}_2, \ldots, \bo{A}_N$. 

\vspace{-1mm}
\subsection{Spectral clustering for a single graph}
First, we recall the spectral clustering method to learn SBMs for a single graph $\bo{A}_n$~\citep{von2007tutorial,lei2015consistency}.
Let $\bo{P}_n$ refer to the edge probability matrix of graph $n$ under an SBM, where $\bo{P}_n = \bm{X}_{n}\mathbf{\Theta}\bm{X}_{n}^{T}$ (Equation~\eqref{eq:gen_model_single}). 
Since we do not consider self-loops, $\mathbb{E}[\bo{A}_n] = \bo{P}_n - \text{diag}(\bo{P}_n)$. %
Further, write the eigendecomposition of $\bo{P}_n$ as $\bo{P}_n = \bo{U}_n\bo{D}_n\bo{U}_n^{T}$. Here, $\bo{U}_n$ is a $(|V_n|\times K)$ matrix of eigenvectors related to the $K$ largest absolute eigenvalues and $\bo{D}_n$ is a $K \times K$ diagonal matrix with the $K$ non-zero eigenvalues of $\bm{P}_n$. 
Let $|G_{nk}|$ refer to the number of nodes that are members of cluster $k$, and $\Delta_n  =\left(\bm{X}_{n}^{T}\bm{X}_{n}\right)^{1/2}$ define a $K \times K$ diagonal matrix with entries $\sqrt{|G_{nk}|}$.
Define $\bo{Z}_n\tilde{\bo{D}}_n\bo{Z}_n^{T}$ as the eigendecomposition of $\Delta_n\mathbf{\Theta}\Delta_n$. Then
\begin{align}
  \bo{U}_n\bo{D}_n\bo{U}_n^{T} = \bo{P}_n  &= \bm{X}_{n}\mathbf{\Theta}\bm{X}_{n}^{T} 
 = \bm{X}_{n}\Delta_n^{-1}\Delta_n\mathbf{\Theta}\Delta_n\Delta_n^{-1}\bm{X}_{n}^{T} = \bm{X}_{n}\Delta_n^{-1}\bo{Z}_n\tilde{\bo{D}}_n\bo{Z}_n^{T}\Delta_n^{-1}\bm{X}_{n}^{T}.
\label{eq:lemma_proof}
 \end{align}
 Since $\bo{D}_n$ and $\tilde{\bo{D}}_n$ are both diagonal, and  $\bm{X}_{n}\Delta_n^{-1}$ and $\bo{Z}_n$ are both orthonormal, 

\vspace{-5mm}
\begin{equation}
  \bo{D}_n = \tilde{\bo{D}}_n, \quad \bo{U}_n = \bm{X}_{n}\Delta_n^{-1}\bo{Z}_n.
\label{eq:eigen_def_single}
\end{equation}
In practice, we use the observed adjacency matrix $\bo{A}_n$ as a proxy 
for $\bo{P}_n$, and replace $\bo{U}_n$ in Eqs.~\eqref{eq:lemma_proof} and \eqref{eq:eigen_def_single}
with $\widehat{\bo{U}}_n$ calculated from the eigendecomposition 
$\bo{A}_n = 
\widehat{\bo{U}}_n\widehat{\bo{D}}_n\widehat{\bo{U}}_n^{T}$. 
Finally we note that each row of $\bm{X}_{n}$ has only one
non-zero element, indicating which group that node belongs to. Thus, as in \cite{lei2015consistency}, we can 
use k-means clustering to recover $\bm{X}_{n}$ and $\bo{W}_n= \Delta_n^{-1}\bo{Z}_n$ from 
$\widehat{\bo{U}_n}$:

\vspace{-5mm}
\begin{align}
  \left(\widehat{\bm{X}}_{n}, \widehat{\bm{W}}_n\right) = \argmin_{\substack{\bm{X}_{n} \in \mathcal{M}_{|V_n|,K}\\\bm{W}_n \in \mathbb{R}^{K \times K}}} \|\bm{X}_{n} \bm{W}_n - \widehat{\bm{U}}_n\|_F^2
\label{eq:opt_kmeans}
\end{align}

\vspace{-4mm}
Given a solution $\left(\widehat{\bm{X}}_{n}, \widehat{\bm{W}}_n\right)$ from Equation~\eqref{eq:opt_kmeans}, we can estimate $\bo{\Theta}$ from the cluster memberships $\widehat{\bo{X}}_n$ as:
\begin{align}
\widehat{\bm{\Theta}}_n =& \widehat{\bm{S}}_n + \widehat{\Delta}_n^{-2}\left[\mathbb{I}_k-\widehat{\Delta}_n^{-2}\right]^{-1}\diag\left(\widehat{\bm{S}}_n\right)
\label{eq:theta_estimator_n}
\end{align}

\vspace{-2.8mm}
where $\widehat{\bm{S}}_n = \widehat{\Delta}_n^{-2}\widehat{\bm{X}}_n^T\bm{A}_n\widehat{\bm{X}}_n\widehat{\Delta}_n^{-2}$ and $\widehat{\Delta}_n^{2} = \widehat{\bm{X}}_n^T\widehat{\bm{X}}_n$.
If we assume the true membership matrix $\bm{X}$ is known, then the estimator $\widehat{\bm{\Theta}}$ in Eq.~\eqref{eq:theta_estimator_n} is unbiased for the true $\bm{\Theta}$. See Appendix \ref{subsec:theta_n_details} for details of derivation and unbiasedness. %
\vspace{-2mm}
\subsection{Naive spectral clustering for multiple graphs }
\vspace{-2mm}
Given multiple unaligned graphs, the procedure above can be applied to each graph, returning a set of $\bm{X}_{n}$'s and $\bm{\Theta}_n$'s, one for each graph. The complexity of this is $\mathcal{O}(N\phi + |V|K)$, where $\phi$ refers to the complexity of eigen decomposition on a single graph (typically $\mathcal{O}(|E|)$ for sparse graphs \cite{pan1999complexity,golub2012matrix}). 
However, this does not recognize that a single $\bm{\Theta}$ is shared across all graphs. 
Estimating a global $\bm{\Theta}$ from the individual $\bm{\Theta}_n$s requires an {\em alignment} step, to determine a mapping among the $\bm{\Theta}_n$s. The complexity of determining the alignment is $\mathcal{O}(NK^2)$ and is a two-stage procedure that results in loss of statistical efficiency, especially in 
settings with heterogenerous, imbalanced graphs. 
We refer to this approach as \isomodel.
We propose a novel algorithm to get around these issues by understanding how each graph relates to the global structure. 

\vspace{-2mm}
\subsection{Joint spectral clustering for multiple graphs}
\vspace{-2mm}

Let $|V| = \sum_n |V_n|$, where $|V_n|$ refers to the number of nodes in $A_n$. 
Consider $|V| \times |V|$ block diagonal matrix $\bo{A}$ representing the whole dataset of adjacency matrices, and define an associated probability matrix $\bo{P}$:

%
\vspace{-2mm}
\begin{equation}
\hspace{-2mm}
\bo{A} = 
  \begin{bmatrix}
\bo{A}_1 & \dots & 0 & \dots & 0\\
\vdots & \ddots & \vdots & \vdots & \vdots  \\
0 & \dots & \bo{A}_n & \dots & 0 \\
\vdots & \vdots & \vdots & \ddots&\vdots\\
0 & \dots & 0 & \dots & \bo{A}_N\\
\end{bmatrix},
\bo{P}= %
\begin{bmatrix}
\bo{P}_1 & \dots & \bo{P}_{1n} & \dots & \bo{P}_{1N}\\
\vdots &  \ddots & \vdots & \vdots & \vdots \\
\bo{P}_{n1} & \dots & \bo{P}_n & \dots & \bo{P}_{nN} \\
\vdots &  \vdots & \vdots & \ddots & \vdots \\
\bo{P}_{N1} &  \dots & \bo{P}_{Nn} & \dots & \bo{P}_N\\
\end{bmatrix} .
\nonumber
\end{equation} 

%

\vspace{-2mm}
Write the membership-probability decomposition of $\bo{P}$ as $\bo{P} = \bm{X}\mathbf{\Theta}\bm{X}^{T}$, here, $\bm{X}$ is the stacked $|V| \times K$ matrix of the $\bm{X}_{n}$'s for all graphs, and $\bo{\Theta}$ a $K \times K$ matrix of edge-probabilities among groups.
Note that for $i \neq j$, $\bo{P}_{ij}$ includes edge-probabilities between nodes {\em in different graphs}, something we cannot observe. 
As before, define $\Delta = \left(\bm{X}^{T}\bm{X}\right)^{1/2}$,
and the eigendecomposition of $\bo{P}$ gives 
\begin{align}
\bo{U}\bo{D}\bo{U}^{T} = \bo{P} 
 &= \bm{X}\mathbf{\Theta}\bm{X}^{T} 
 = \bm{X}\Delta^{-1}\Delta\mathbf{\Theta}\Delta\Delta^{-1}\bm{X}^{T} \nonumber = \bm{X}\Delta^{-1}\bo{Z}\tilde{\bo{D}}\bo{Z}^{T}\Delta^{-1}\bm{X}^{T},  
\label{eq:lemma_proof2}
\end{align}
with $\bo{Z}\tilde{\bo{D}}\bo{Z}^{T}$ corresponding to the eigendecomposition of $\Delta\mathbf{\Theta}\Delta$. Thus, similar to the single graph case 
\begin{equation}
 \bo{D} = \tilde{\bo{D}},  \quad \bo{U} = \bm{X} \Delta^{-1}\bo{Z}, 
\label{eq:eigen_def_multiple}	
\end{equation}
Note that $\bo{D}$ is still a $K \times K$ diagonal matrix. Let $\bm{U}_{n*} = \bm{X}_{n*} \Delta^{-1}\bo{Z}$ refer to the subset of $\bm{U}$ corresponding to the nodes in graph $n$, and define $\bm{X}_{n*}$ similarly. 
Note that $\bm{X}_{n*} = \bm{X}_{n}$, though $\bm{U}_{n*}$ differs from $\bm{U}_{n}$ of Eq.~\eqref{eq:eigen_def_single}.
By selecting the decomposition related to graph $n$, we have,  
\begin{align}
\bo{P}_n = \left(\bo{P}\right)_{n,n} = \left(\bo{U}\bo{D}\bo{U}^{T}\right)_{n,n}  
= \bo{U}_{n*}\bo{D}\bo{U}_{n*}^{T}.
\label{eq:stacked2partial}
\end{align}
From Eq.\eqref{eq:lemma_proof}, %
$\bo{U}_{n*}\bo{D}\bo{U}_{n*}^{T}=\bo{U}_n\bo{D}_n\bo{U}_n^{T}$.
If we let $\bo{Q}_n:=\bo{U}_n\bo{D}_n$,
\begin{align}
\bo{U}_{n*}\bo{D}\bo{U}_{n*}^{T} &=\bo{Q}_n\bo{U}_n^{T} \nonumber \\
\bo{U}_{n*}\bo{D}\bo{U}_{n*}^{T}\bo{U}_n &=\bo{Q}_n\bo{U}_n^{T}\bo{U}_n \nonumber \\
\bo{U}_{n*}\bo{D}(\bm{X}_{n*}\Delta^{-1}\bo{Z})^{T}\bm{X}_{n} \Delta_n^{-1}\bo{Z}_n &=\bo{Q}_n \nonumber \\
\bo{U}_{n*}\bo{D}\bo{Z}^{T}\Delta^{-1}\bm{X}_{n*}^{T}\bm{X}_{n} \Delta_n^{-1}\bo{Z}_n &=\bo{Q}_n \nonumber \\
\bo{U}_{n*}\bo{D}\bo{Z}^{T}\Delta^{-1}\Delta_n^{2} \Delta_n^{-1}\bo{Z}_n &=\bo{Q}_n \label{eq:q_n} 
\end{align}
From Eq.~\eqref{eq:eigen_def_multiple}, we then have 
\begin{equation}\bo{U}_{n*}\bo{D} =\bo{Q}_n \bo{Z}_n^T \Delta_n^{-1} \Delta \bo{Z}=\bm{X}_{n*}\bm{W}
\label{eq:global_local_expression}
\end{equation}
where $\bm{W}:=\Delta^{-1}\bo{Z} \bo{D}$.
Contrast this with the single graph, which from Eq.\eqref{eq:eigen_def_single}, gives $\bo{U}_n\bo{D}_n=\bo{Q}_n=\bm{X}_{n}\Delta_n^{-1}\bo{Z}_n\bo{D}_n$. 
If we can estimate the middle (or right) term in Eq.~\eqref{eq:global_local_expression} from data, 
we can solve for joint community assignments: %
\begin{equation}
\begin{aligned}
\left(\widehat{\bm{X}}, \widehat{\bm{W}}\right) 
=\hspace{-2mm}\argmin_{\substack{\bm{X} \in \mathcal{M}_{|V|,K} \\ \bm{W} \in \mathbb{R}^{K \times K}}} 
\sum_n^N \|\bm{X}_{n} \bm{W} - {\bo{Q}}_n \bm{Z}_n^{T}\Delta_n^{-1}\Delta\bo{Z} \|_F^2
\label{eq:opt_kmeans_graphLevel}
\end{aligned}
\end{equation}
While we can estimate $\bo{Q}_n$ from the data (Equation~\eqref{eq:q_n}), 
we cannot estimate $\bo{Z}_n$ or $\bo{Z}$ trivially. Instead, we will optimize an upper bound of a transformation of Equation~\eqref{eq:opt_kmeans_graphLevel}.
\begin{lemma}
The optimization in Eq.~\eqref{eq:opt_kmeans_graphLevel} is equivalent to 
\begin{equation}
  \hspace{-2mm}\argmin_{\substack{\bm{X} \in \mathcal{M}_{|V|,K} \\ \bm{W} \in \mathbb{R}^{K \times K}}}  \sum_{n=1}^N \left\|a_n(\bm{X}_n,\bm{W})  + b_n(\bm{X}_n) \right\|^2_F, \text{where}
\label{eq:opt_kmeans_graphLevel_v2}
\end{equation}
\vspace{-5mm}
$$a_n(\bm{X}_n,\bm{W}) := \left( \bm{X}_{n}\bo{W} - \bo{Q}_n^* \right)\bm{Z}^{T}\Delta^{-1}\Delta_n\bm{Z}_n, \ \ \ \ \  \ \ \  
b_n(\bm{X}_n) := \bo{Q}_n^*\left(\bm{Z}^{T}\Delta^{-1}\Delta_n\bo{Z}_n - \sqrt{\frac{|V_n|}{|V|}}\mathbb{I}_K\right)$$
\label{lem:kmean_opt}
\end{lemma}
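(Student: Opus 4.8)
\emph{Proof plan.} The plan is to reduce the claimed equivalence to a single per-graph algebraic identity together with one invertibility remark. Write $\bm{R}_n := \bm{Z}^{T}\Delta^{-1}\Delta_n\bm{Z}_n$ and $\bm{T}_n := \bm{Z}_n^{T}\Delta_n^{-1}\Delta\bm{Z}$, so that the $n$-th summand of \eqref{eq:opt_kmeans_graphLevel} is $\|\bm{X}_n\bm{W} - \bo{Q}_n\bm{T}_n\|_F^2$ (matching the term $\bo{Q}_n\bm{Z}_n^{T}\Delta_n^{-1}\Delta\bo{Z}$ there), while the factor appearing in both $a_n$ and $b_n$ is $\bm{R}_n$. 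I will establish
\[
a_n(\bm{X}_n,\bm{W}) + b_n(\bm{X}_n) \;=\; \bigl(\bm{X}_n\bm{W} - \bo{Q}_n\bm{T}_n\bigr)\,\bm{R}_n ,
\]
i.e.\ the $n$-th residual of \eqref{eq:opt_kmeans_graphLevel_v2} is the $n$-th residual of \eqref{eq:opt_kmeans_graphLevel} post-multiplied by the fixed matrix $\bm{R}_n$; since $\bm{R}_n$ is invertible, this suffices to show the two problems have the same set of global minimizers.

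First I would check that $\bm{R}_n\bm{T}_n = \mathbb{I}_K = \bm{T}_n\bm{R}_n$. Both $\bm{Z}$ and $\bm{Z}_n$ are $K\times K$ orthogonal matrices (eigenvector matrices of the symmetric $\Delta\bm{\Theta}\Delta$ and $\Delta_n\bm{\Theta}\Delta_n$, as in the derivations around \eqref{eq:lemma_proof} and \eqref{eq:lemma_proof2}), and $\Delta,\Delta_n$ are positive diagonal, hence commute with each other and with their inverses; inserting $\bm{Z}_n\bm{Z}_n^{T}=\mathbb{I}_K$ and $\bm{Z}\bm{Z}^{T}=\mathbb{I}_K$ and cancelling $\Delta_n^{-1}\Delta_n$ and $\Delta^{-1}\Delta$ gives the claim, so $\bm{T}_n = \bm{R}_n^{-1}$. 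Next I would expand $a_n+b_n$ directly: with $a_n = (\bm{X}_n\bm{W}-\bo{Q}_n^*)\bm{R}_n$ and $b_n = \bo{Q}_n^*(\bm{R}_n - \sqrt{|V_n|/|V|}\,\mathbb{I}_K)$, the two copies of $\bo{Q}_n^*\bm{R}_n$ cancel, leaving $a_n+b_n = \bm{X}_n\bm{W}\bm{R}_n - \sqrt{|V_n|/|V|}\,\bo{Q}_n^*$. Using the definition of $\bo{Q}_n^*$ as the normalised copy of $\bo{Q}_n$, i.e.\ $\sqrt{|V_n|/|V|}\,\bo{Q}_n^* = \bo{Q}_n$, together with $\bm{R}_n\bm{T}_n=\mathbb{I}_K$, I factor $\bm{R}_n$ out on the right: $a_n+b_n = \bm{X}_n\bm{W}\bm{R}_n - \bo{Q}_n\bm{T}_n\bm{R}_n = (\bm{X}_n\bm{W}-\bo{Q}_n\bm{T}_n)\bm{R}_n$, which is the displayed identity.

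To conclude, since each $\bm{R}_n$ is invertible, $\|a_n(\bm{X}_n,\bm{W})+b_n(\bm{X}_n)\|_F = 0$ iff $\|\bm{X}_n\bm{W}-\bo{Q}_n\bm{T}_n\|_F = 0$, so the objectives of \eqref{eq:opt_kmeans_graphLevel} and \eqref{eq:opt_kmeans_graphLevel_v2} vanish on exactly the same set of pairs $(\bm{X},\bm{W})$. By \eqref{eq:global_local_expression} this set is nonempty: taking $\bm{X}$ to be the true membership matrix and $\bm{W} = \Delta^{-1}\bm{Z}\bm{D}$ makes every original residual $\bm{X}_n\bm{W}-\bo{Q}_n\bm{T}_n$ vanish. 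Hence $0$ is the common minimum value of both problems, and a pair $(\bm{X},\bm{W})$ attains it for one iff it does for the other, so the two $\argmin$ problems are equivalent.

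The main obstacle is bookkeeping rather than conceptual: one must keep the order of the non-commuting factors $\bm{X}_n,\bm{W},\bo{Q}_n,\bm{Z},\bm{Z}_n$ and their transposes straight, and keep the $\bo{Q}_n^*$-versus-$\bo{Q}_n$ normalisation consistent; the one real idea is that the two $\bo{Q}_n^*\bm{R}_n$ terms in $a_n+b_n$ are arranged precisely to cancel. A point worth stating explicitly is the meaning of ``equivalent'': $\bm{R}_n$ is orthogonal only in the degenerate case $\Delta=\Delta_n$, so post-multiplying by $\bm{R}_n$ genuinely changes the Frobenius norm pointwise --- the equivalence is therefore of minimizer sets (via the attained zero lower bound), not an identity of objective values.
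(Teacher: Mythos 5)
Your proposal is correct and follows essentially the same route as the paper's own proof: both post-multiply the population residual $\bm{X}_n\bm{W}-\bo{Q}_n\bm{Z}_n^{T}\Delta_n^{-1}\Delta\bm{Z}$ by $\bm{Z}^{T}\Delta^{-1}\Delta_n\bm{Z}_n$, add and subtract $\bo{Q}_n^*\bm{Z}^{T}\Delta^{-1}\Delta_n\bm{Z}_n$, and regroup into $a_n+b_n$. You are somewhat more explicit than the paper in verifying that $\bm{Z}^{T}\Delta^{-1}\Delta_n\bm{Z}_n$ is the inverse of $\bm{Z}_n^{T}\Delta_n^{-1}\Delta\bm{Z}$ and in clarifying that the ``equivalence'' is of minimizer sets (via the common attained zero) rather than of objective values, but the underlying argument is the same.
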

\vspace{-7mm}
\begin{proof}
See Appendix \ref{subsec:opt_kmeans_graphLevel_details_v2}.
\end{proof}

Next we derive a bound for Eq.~\eqref{eq:opt_kmeans_graphLevel_v2} 
using the triangle inequality and sub-multiplicative norm property. Let $|\bm{M}|$ be the element-wise absolute values of matrix $\bm{M}$. Then 

\begin{lemma} The following inequality holds
\begin{align}
  \frac{1}{2}\left\|a_n(\bm{X}_n,\bm{W})  + b_n(\bm{X}_n) \right\|^2_F  &
  \le  \left\| \bm{X}_{n}\bo{W} \!\!-\! \bo{Q}_n^* \right\|_F^2 \gamma_n + \left\| \left|\bo{Q}_n^*\right| \left( \Delta^{-1}\Delta_n + \sqrt{\frac{|V_n|}{|V|}} \right) \right\|_F^2
\label{eq:opt_kmeans_graphLevel_v3} \\ 
  := & \widetilde{a}_n(\bm{X}_n,\bm{W}) + \widetilde{b}_n(\bm{X}_n) 
:= \eta_n(\bm{X}_n,\bm{W})\nonumber
\end{align}
\vspace{-4mm}
\begin{align}
 \text{where  }  \gamma_n &= \left\|\bm{Z}\Delta^{-1}\Delta_n\bo{Z}_n^{T}\right\|_F^2 = \text{tr}\left(\bm{Z}_n\Delta_n\Delta^{-2}\Delta_n\bm{Z}_n^{T}\right) \nonumber \\
        &= \text{tr}\left(\Delta_n^2\Delta^{-2}\right)=\sum_{m=1}^K \frac{|G_{n m}|}{|G_{\cdot m}|}. \label{eq:def_gamman}
\end{align}
\vspace{-4mm}
\label{lem:kmean_opt2}
\end{lemma}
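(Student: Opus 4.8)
The plan is to bound the squared Frobenius norm of the sum $a_n+b_n$ by the sum of the squared norms of the two pieces, and then bound each piece separately. First, by the triangle inequality for $\|\cdot\|_F$ together with the elementary inequality $(c+d)^2\le 2c^2+2d^2$, one has $\frac{1}{2}\|a_n(\bm{X}_n,\bm{W})+b_n(\bm{X}_n)\|_F^2 \le \|a_n(\bm{X}_n,\bm{W})\|_F^2 + \|b_n(\bm{X}_n)\|_F^2$, so it suffices to show $\|a_n\|_F^2 \le \widetilde{a}_n(\bm{X}_n,\bm{W})$ and $\|b_n\|_F^2 \le \widetilde{b}_n(\bm{X}_n)$.

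For the $a_n$ term I would apply submultiplicativity of the Frobenius norm, $\|\bm{M}\bm{N}\|_F \le \|\bm{M}\|_F\,\|\bm{N}\|_F$, with $\bm{M}=\bm{X}_n\bm{W}-\bm{Q}_n^*$ and $\bm{N}=\bm{Z}^{T}\Delta^{-1}\Delta_n\bm{Z}_n$, which gives $\|a_n\|_F^2 \le \|\bm{X}_n\bm{W}-\bm{Q}_n^*\|_F^2\cdot\|\bm{Z}^{T}\Delta^{-1}\Delta_n\bm{Z}_n\|_F^2$. It then remains to evaluate the scalar $\|\bm{Z}^{T}\Delta^{-1}\Delta_n\bm{Z}_n\|_F^2$ and identify it with $\gamma_n$: rewriting it as $\text{tr}\!\left(\bm{Z}_n^{T}\Delta_n\Delta^{-1}\bm{Z}\bm{Z}^{T}\Delta^{-1}\Delta_n\bm{Z}_n\right)$ and using $\bm{Z}^{T}\bm{Z}=\bm{Z}_n\bm{Z}_n^{T}=\mathbb{I}_K$ (since $\bm{Z},\bm{Z}_n$ are $K\times K$ orthogonal), the cyclic property of the trace, and the fact that $\Delta^2=\bm{X}^{T}\bm{X}$ and $\Delta_n^2=\bm{X}_n^{T}\bm{X}_n$ are diagonal with entries $|G_{\cdot m}|=\sum_{n'}|G_{n'm}|$ and $|G_{nm}|$, it collapses to $\text{tr}(\Delta_n^2\Delta^{-2})=\sum_{m=1}^{K}|G_{nm}|/|G_{\cdot m}|$. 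This is exactly $\gamma_n$, so the first term equals $\widetilde{a}_n(\bm{X}_n,\bm{W})$; along the way this also establishes the chain of identities displayed in \eqref{eq:def_gamman}.

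For the $b_n$ term I would write $b_n(\bm{X}_n)=\bm{Q}_n^*\!\left(\bm{R}_n-\sqrt{|V_n|/|V|}\,\mathbb{I}_K\right)$ with $\bm{R}_n:=\bm{Z}^{T}\Delta^{-1}\Delta_n\bm{Z}_n$. Applying the entrywise triangle inequality together with entrywise submultiplicativity $|\bm{M}\bm{N}|\preceq|\bm{M}|\,|\bm{N}|$ (here $\preceq$ denotes the entrywise order and $|\cdot|$ the entrywise absolute value) gives $|b_n|\preceq|\bm{Q}_n^*|\,|\bm{R}_n|+\sqrt{|V_n|/|V|}\,|\bm{Q}_n^*|$; one then bounds $|\bm{R}_n|$ by the nonnegative diagonal matrix $\Delta^{-1}\Delta_n$, so that the two terms combine into $|\bm{Q}_n^*|\big(\Delta^{-1}\Delta_n+\sqrt{|V_n|/|V|}\,\mathbb{I}_K\big)$, and invokes monotonicity of $\|\cdot\|_F$ under $\preceq$ on nonnegative matrices to conclude $\|b_n\|_F^2\le\big\||\bm{Q}_n^*|(\Delta^{-1}\Delta_n+\sqrt{|V_n|/|V|}\,\mathbb{I}_K)\big\|_F^2=\widetilde{b}_n(\bm{X}_n)$. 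Adding the two bounds gives \eqref{eq:opt_kmeans_graphLevel_v3}.

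The main obstacle is the last step for $b_n$: $\bm{R}_n=\bm{Z}^{T}\Delta^{-1}\Delta_n\bm{Z}_n$ is a two-sided orthogonal conjugation of the diagonal matrix $\Delta^{-1}\Delta_n$ by the (in general distinct, and not explicitly available) eigenbasis rotations $\bm{Z}$ and $\bm{Z}_n$, so it is dense, and the per-column weighting hidden in $\widetilde{b}_n$ — each column $m$ of $|\bm{Q}_n^*|$ scaled by $\big(\sqrt{|G_{nm}|/|G_{\cdot m}|}+\sqrt{|V_n|/|V|}\big)^2$ rather than by a uniform constant — is tighter than what a crude $\|\bm{R}_n\|_{op}$ bound would give. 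Pushing the entrywise/columnwise bound on $\bm{Q}_n^*\bm{R}_n$ through therefore requires exploiting the block structure $\bm{Q}_n=\bm{X}_n(\Delta_n^{-1}\bm{Z}_n\bm{D}_n)$ (so that $\bm{Q}_n^*$ inherits the cluster-membership structure of $\bm{X}_n$) and the orthonormality of the rows of $\bm{Z}$ and $\bm{Z}_n$ carefully; everything else is routine manipulation of norms and traces.
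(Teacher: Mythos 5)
Your overall skeleton is the same as the paper's: the reduction $\frac{1}{2}\|a_n+b_n\|_F^2 \le \|a_n\|_F^2 + \|b_n\|_F^2$, the submultiplicative bound $\|a_n\|_F^2 \le \|\bm{X}_n\bm{W}-\bm{Q}_n^*\|_F^2\,\|\bm{Z}^T\Delta^{-1}\Delta_n\bm{Z}_n\|_F^2$, and the trace computation identifying $\|\bm{Z}^T\Delta^{-1}\Delta_n\bm{Z}_n\|_F^2$ with $\gamma_n=\sum_{m}|G_{nm}|/|G_{\cdot m}|$ all match the appendix proof exactly, and those parts are correct.

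The gap is in your treatment of $b_n$, and you have in fact flagged it yourself. The step ``bound $|\bm{R}_n|$ by the nonnegative diagonal matrix $\Delta^{-1}\Delta_n$'' cannot hold entrywise: $\bm{R}_n=\bm{Z}^T\Delta^{-1}\Delta_n\bm{Z}_n$ is a two-sided conjugation of a diagonal matrix by two \emph{different} orthogonal matrices and is dense in general, whereas $\Delta^{-1}\Delta_n$ has zero off-diagonal entries, so $|\bm{R}_n|\preceq\Delta^{-1}\Delta_n$ fails whenever $\bm{R}_n$ has a nonzero off-diagonal entry. Nor does the weaker norm statement $\bigl\||\bm{Q}_n^*|\,|\bm{R}_n|\bigr\|_F\le\bigl\||\bm{Q}_n^*|\Delta^{-1}\Delta_n\bigr\|_F$ follow on its own: writing both sides as traces, the left-hand quadratic term is $\mathrm{tr}\bigl(\Delta_n^2\Delta^{-2}\,\bm{Z}\bm{Q}_n^{*T}\bm{Q}_n^*\bm{Z}^T\bigr)$, and an orthogonal conjugation can move the diagonal mass of the PSD matrix $\bm{Q}_n^{*T}\bm{Q}_n^*$ from a coordinate with small weight $|G_{nm}|/|G_{\cdot m}|$ to one with large weight, increasing the weighted trace. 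This is precisely why the paper does not compare the two products in isolation: it expands $\|b_n\|_F^2=\bigl\|\bm{Q}_n^*\bm{R}_n-\bm{Q}_n^*\sqrt{|V_n|/|V|}\bigr\|_F^2$ and the target $\bigl\||\bm{Q}_n^*|(\Delta^{-1}\Delta_n+\sqrt{|V_n|/|V|}\,\mathbb{I}_K)\bigr\|_F^2$ fully into traces and argues that the difference of the \emph{totals} is nonnegative, with the positive cross term $2\sqrt{|V_n|/|V|}\,\mathrm{tr}\bigl(\Delta_n\Delta^{-1}|\bm{Q}_n^{*T}||\bm{Q}_n^*|\bigr)$ on the right absorbing the excess of the rotated quadratic term. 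Your closing remark that the argument ``requires exploiting the block structure carefully'' is an accurate admission that this step is not yet established; as written, the bound $\|b_n\|_F^2\le\widetilde{b}_n(\bm{X}_n)$ is the missing piece of the lemma.
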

\begin{proof}
  See Appendix~\ref{subsec:opt_kmeans_graphLevel_details}
\end{proof}
\vspace{-3mm}
\paragraph{Summary:} We can now optimize the bound on Eq.~\eqref{eq:opt_kmeans_graphLevel_v2} 

\begin{equation}
  \left(\widehat{\bm{X}}, \widehat{\bm{W}} \right) = 
\argmin_{\substack{\bm{X} \in \mathcal{M}_{|V|,K} \\ 
\bm{W} \in \mathbb{R}^{K \times K}}}
\sum_n^N \eta_n(\bm{X}_n,\bm{W})
\label{eq:kmeans_final}
\end{equation}
The terms $\widetilde{a}_n(\cdot)$ and $\widetilde{b}_n(\cdot)$  are weighted  sums of squares of $\bm{Q}_n$. However, we center each $\bm{Q}_n$ at the global weighted mean $\bm{W}$  in $\widetilde{a}_n(\cdot)$. The term $\gamma_n$ controls the importance of the global parameter $\bm{W}$ in each graph. Thus, $\gamma_n$ downweights the effect of $\bm{W}$ in small graphs and in graphs with highly underrepresented communities.  Intuitively, the term $\widetilde{a}_n(\cdot)$ is assigning nodes to clusters assuming $\bm{Z}_n^{T}\Delta_n^{-1}\Delta\bo{Z} = \sqrt{|V|/|V_n|}\mathbb{I}_K$, and  $\widetilde{b}_n(\cdot)$ accounts for the distance between a given graph and the global distribution of nodes over clusters.

\paragraph{Optimization:} We optimize Equation~\eqref{eq:kmeans_final} using a heuristic inspired by Lloyd's algorithm for k-means. 
This involves iterating two steps: (1) compute the means $\bm{W}$ given observations in each cluster $\bm{X}$; (2) assign observations to clusters $\bm{X}$ given means $\bm{W}$:
\begin{enumerate}
   \item \textbf{Compute the means:} We update $\bm{W}$ by minimizing $\eta_n(\bm{X_n},\bm{W})$ for a given $\bm{X}$, i.e. $\sum_n^N \nabla_{_{\bm{W}}} \eta_n(\bm{X}_n,\bm{W}) = 0$. Note that the $\widetilde{b}_n$ does not involve $\bm{W}$ so it is dropped,
\begin{align}
&\sum_n^N 2\bm{X}_n^T(\bm{X}_n\widehat{\bm{W}} - \bm{Q}^*_n)\gamma_n = 0 \nonumber \\
&\widehat{\bm{W}} = \left[\sum_n^N \bm{X}_n^T\bm{X}_n\gamma_n\right]^{-1} \sum_n^N \bm{X}_n^T\bm{Q}^*_n\gamma_n  
\label{eq:w_update_rule}
\end{align}
   \item \textbf{Assign nodes to communities:} We assign each node to the cluster that minimizes Eq.~\eqref{eq:kmeans_final}.
     Accordingly, define $\omega_{ni}(k)$ as the distance of node $i$ to cluster $k$: %
\begin{align}
  \omega_{ni}(k) = \left\|\bo{W}_k - \bo{Q}_{ni}^* \right\|^2\text{tr}\left(\widetilde{\Delta}_{ni}^2(k) \right) + \left\|\left|\bo{Q}_{ni}^*\right|\left(\widetilde{\Delta}_{ni}(k) + \sqrt{\frac{|V_n|}{|V|}}\mathbb{I}_K \right) \right\|^2
\label{eq:def_omega2}
\end{align}
\noindent $\bm{W}_k$ is the $k$-th row of $\bm{W}$ and $\widetilde{\Delta}_{ni}(k))$ is the value of $\Delta^{-1}\Delta_n$ if node $i$ is placed in cluster $k$. Precisely, say node $i$ is currently in cluster $l$, then we have
\begin{align}
\widetilde{\Delta}_{ni}(k) = \left[(\Delta^{2}-\diag(H_l)+\diag(H_k))\right]^{-1/2} \times \left[(\Delta_n^{2}-\diag(H_l)+\diag(H_k))\right]^{1/2}
\label{eq:def_gammaTilde}
\end{align}
\noindent where $H_l$ is a size $K$ one-hot vector at position $l$. Then
\begin{align}
\bm{X}_{ni} =& \argmin_k \omega_{ni}(k)
\end{align}
 \end{enumerate} 
 \vspace{-3mm}
\paragraph{Algorithm:} Algorithm~\ref{algo:jointSpec2} outlines the overall procedure for learning the \model. The complexity is $\mathcal{O}(N\phi + |V|K)$. Recall that $\phi$ refers to the complexity of eigen decomposition on a single graph. Note that our derived objective does not require decomposition of the full graph $\bo{A}$, instead decomposing each individual graph and then using the results to jointly estimate $\bo{X}$ and $\bo{W}$.
Notice that the extra $\mathcal{O}(NK^2)$ for alignment in the \isomodel\ is not needed here.  
Given the cluster assignments $\bm{X}$, we can easily estimate the cluster edge probabilities $\bm{\Theta}$ (see Eq.\eqref{eq:theta_estimator_n}). 
%

%
%
%

%
%
%
%
%
%
%
%

%

%
%
%
%
%
%
%
%
%
%

%
%
\vspace{-2mm}
\section{Comparing Joint SBM to Isolated SBM}
\label{sec:analysis}

Both \model\ and \isomodel\ use the eigen decompositions of the individual $N$ graphs, and are closely related.  
The biggest difference is the $\bm{Z}_n^{T}\Delta_n^{-1}\Delta\bo{Z}$ term in the definition of $\widetilde{a}_n(\bo{X}_n,\bo{W})$ (Eq.~\eqref{eq:opt_kmeans_graphLevel_v3}), which, intuitively, normalizes each individual graph eigenvector based on the distribution of nodes over the clusters in the graph.  
If the graphs are balanced, i.e., they have roughly the same proportion of nodes over clusters, then $\bm{Z}_n^{T}\Delta_n^{-1}\Delta\bo{Z} \approx \sqrt{|V_n|^{-1}|V|}$ which does not depend on the cluster sizes. Lemma \ref{lemma:same_prop_pop} formalizes this (See Appendix \ref{subsec:proof_lemma} for proof):
\begin{lemma}
Let the pair $(\bm{X},\bm{\Theta})$ represent a joint-SBM with $K$ communities for $N$ graphs, where $\bm{X}$ is the stacked membership matrix over the $N$ graphs and $\bm{\Theta}$ is full rank. The size of each graph $|V_n|$ may vary, but assume the graphs are balanced in expectation in terms of communities, i.e., assume the same distribution of cluster membership for all graphs: $\bm{X}_{ni} \stackrel{iid}{\sim} \text{Mult}(\bm{\zeta})$ for all $n \in [1,...,N]$ and $i \in [1,...,|V_n|]$. Then, 
\begin{equation}\mathbb{E}\left[\bm{Z}_n^{T}\Delta_n^{-1}\Delta\bo{Z}\right] = \sqrt{|V_n|^{-1}|V|}
\end{equation}
\label{lemma:same_prop_pop}
\end{lemma}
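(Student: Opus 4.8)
The plan is to reduce the claim to two facts: (i) under the stated sampling model the random diagonal count matrices $\Delta_n^2 = \bm{X}_n^{T}\bm{X}_n$ and $\Delta^2 = \bm{X}^{T}\bm{X}$ have expectations that are proportional to one another, and (ii) whenever these two matrices are \emph{exactly} proportional, the quantity $\bm{Z}_n^{T}\Delta_n^{-1}\Delta\bm{Z}$ collapses to a scalar multiple of $\mathbb{I}_K$. First I would record the moment computation: since $\bm{X}_{ni}\stackrel{iid}{\sim}\text{Mult}(\bm{\zeta})$, the $k$-th diagonal entry $|G_{nk}|$ of $\Delta_n^2$ is $\text{Binomial}(|V_n|,\zeta_k)$ and the $k$-th diagonal entry $|G_{\cdot k}| = \sum_m |G_{mk}|$ of $\Delta^2$ is $\text{Binomial}(|V|,\zeta_k)$, so $\mathbb{E}[\Delta_n^2] = |V_n|\,\diag(\bm{\zeta})$ and $\mathbb{E}[\Delta^2] = |V|\,\diag(\bm{\zeta})$; in particular $\mathbb{E}[\Delta^2] = (|V|/|V_n|)\,\mathbb{E}[\Delta_n^2]$.

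Next I would analyze the map $(\Delta_n^2,\Delta^2)\mapsto \bm{Z}_n^{T}\Delta_n^{-1}\Delta\bm{Z}$ at the exactly balanced configuration $\Delta_n^2 = |V_n|\,\diag(\bm{\zeta})$, $\Delta^2 = |V|\,\diag(\bm{\zeta})$, i.e.\ $\Delta_n = \sqrt{|V_n|}\,\diag(\bm{\zeta})^{1/2}$ and $\Delta = \sqrt{|V|}\,\diag(\bm{\zeta})^{1/2}$. There $\Delta_n^{-1}\Delta = \sqrt{|V|/|V_n|}\,\mathbb{I}_K$. Moreover $\Delta_n\bm{\Theta}\Delta_n = |V_n|\,\bm{M}$ and $\Delta\bm{\Theta}\Delta = |V|\,\bm{M}$ where $\bm{M} := \diag(\bm{\zeta})^{1/2}\bm{\Theta}\diag(\bm{\zeta})^{1/2}$; as these are positive scalar multiples of the same symmetric matrix, they admit a common orthonormal eigenbasis, so we may take $\bm{Z}_n = \bm{Z}$. (The full-rank hypothesis on $\bm{\Theta}$ makes $\bm{M}$ full rank, and one fixes a single orthonormal basis for each of $\bm{M}$'s eigenspaces to kill the residual sign/rotation ambiguity that defines $\bm{Z}_n$ and $\bm{Z}$.) Hence $\bm{Z}_n^{T}\Delta_n^{-1}\Delta\bm{Z} = \sqrt{|V|/|V_n|}\,\bm{Z}^{T}\bm{Z} = \sqrt{|V_n|^{-1}|V|}\,\mathbb{I}_K$, the asserted value.

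The remaining, and genuinely delicate, step is to pass from the exactly balanced configuration back to ``balanced in expectation.'' The map above is nonlinear in the random counts, so $\mathbb{E}[\bm{Z}_n^{T}\Delta_n^{-1}\Delta\bm{Z}]$ is not literally that map evaluated at $(\mathbb{E}[\Delta_n^2],\mathbb{E}[\Delta^2])$, and the stated identity is exact only under a reading in which proportional balance holds conditionally (e.g.\ conditioning so that the empirical cluster proportions coincide across $n$). I expect the appendix to adopt exactly this reading, in which case the computation of the previous paragraph \emph{is} the proof and the expectation is taken trivially. If instead one wants the distributional statement literally, the route is a concentration/continuity argument: the Binomial counts concentrate about their means at rate $|V_n|^{-1/2}$, and away from eigenvalue collisions the eigenprojectors of $\Delta_n\bm{\Theta}\Delta_n$ and $\Delta\bm{\Theta}\Delta$ depend smoothly on the matrices, so $\bm{Z}_n^{T}\Delta_n^{-1}\Delta\bm{Z}\to\sqrt{|V_n|^{-1}|V|}\,\mathbb{I}_K$; these quantities are bounded (proportions $\zeta_k$ are bounded away from $0$, $\bm{Z},\bm{Z}_n$ are orthonormal), so their expectations converge as well. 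The honest obstacle is precisely this gap between ``in expectation'' and ``exactly balanced,'' together with the bookkeeping needed to choose the sign/column ordering of $\bm{Z}_n$ consistently with $\bm{Z}$ so that the equality, and not merely an equality up to a signed permutation, holds.
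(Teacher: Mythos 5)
Your proposal follows essentially the same route as the paper's own proof in Appendix E: compute $\mathbb{E}[\Delta_n^2]=|V_n|\,\mathrm{diag}(\bm{\zeta})$ and $\mathbb{E}[\Delta^2]=|V|\,\mathrm{diag}(\bm{\zeta})$, observe that at the proportional configuration $\Delta_n\bm{\Theta}\Delta_n$ and $\Delta\bm{\Theta}\Delta$ are scalar multiples of a common matrix so that $\bm{Z}_n=\bm{Z}$ and $\Delta_n^{-1}\Delta=\sqrt{|V|/|V_n|}\,\mathbb{I}_K$, and conclude. The gap you flag — pushing the expectation through the nonlinear map $(\Delta_n^2,\Delta^2)\mapsto\bm{Z}_n^{T}\Delta_n^{-1}\Delta\bm{Z}$ — is real, but the paper's proof commits exactly this step (it writes $\mathbb{E}[\Delta_n\bm{\Theta}\Delta_n]=\sqrt{\alpha_n}\,\mathbb{E}[\Delta\bm{\Theta}\Delta]\sqrt{\alpha_n}$ and then identifies eigenvectors of random matrices from an equality of expectations), so your treatment is, if anything, more careful than the original.
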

\vspace{-8mm}
For cases where we expect Lemma~\ref{lemma:same_prop_pop} to be true, 
we have:
\begin{align*}
\mathbb{E}\left[\eta_n(\bm{X}_n,\bm{W})\right] = \left\| \bm{X}_{n}\bo{W} - \bo{Q}_n^* \right\|_F^2\frac{|V_n|K}{|V|}  
+ \left\| 2\left|\bo{Q}_n^*\right|\sqrt{\frac{|V_n|}{|V|}}\right\|_F^2
\end{align*}
Now, we expect the RHS of Equation~\eqref{eq:kmeans_final} to depend only on $\widetilde{a}_n(\bm{X}_n,\bm{W})$, since $\widetilde{b}_n(\cdot)$ no longer depends on $\bm{X}_n$.
If the graphs are also of the same size, then for each graph, the objective function of  \model\ is equivalent to that of \isomodel. Lemma \ref{lemma:same_obj} formalizes this  (See Appendix \ref{subsec:proof_lemma_2} for proof):
\begin{lemma}
Let the pair $(\bm{X},\bm{\Theta})$ represent a joint-SBM with $K$ communities for $N$ graphs. If the sizes of the graphs are equal, i.e., $|V_n| = |V|/N$ and the graphs are balanced in expectation in terms of communities, i.e., assume the same distribution of cluster membership for all graphs: $\bm{X}_{ni} \stackrel{iid}{\sim} \text{Mult}(\bm{\zeta})$ for all $n \in [1,...,N]$ and $i \in [1,...,|V_n|]$. Then, 
\begin{equation}
\mathbb{E}[\eta_n(\bm{X}_n,\bm{W})] \propto \|\bm{X}_{n} \bm{W}_n - \widehat{\bm{U}}_n\|_F^2
\end{equation}
\label{lemma:same_obj}
\end{lemma}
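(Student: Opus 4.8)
The plan is to start from the expected-objective identity displayed immediately before the lemma, $\mathbb{E}[\eta_n(\bm{X}_n,\bm{W})] = \tfrac{|V_n|K}{|V|}\|\bm{X}_n\bm{W}-\bm{Q}_n^*\|_F^2 + \big\|2|\bm{Q}_n^*|\sqrt{|V_n|/|V|}\big\|_F^2$, which is what Lemma~\ref{lemma:same_prop_pop} gives us: under the iid multinomial membership model the per-community counts satisfy $|G_{nm}|/|G_{\cdot m}|\to |V_n|/|V|$, so $\mathbb{E}[\gamma_n]=\sum_{m=1}^K\mathbb{E}[|G_{nm}|/|G_{\cdot m}|]=K|V_n|/|V|$, $\mathbb{E}[\Delta^{-1}\Delta_n]=\sqrt{|V_n|/|V|}\,\mathbb{I}_K$, and $\mathbb{E}[\bm{Z}_n^{T}\Delta_n^{-1}\Delta\bm{Z}]=\sqrt{|V|/|V_n|}\,\mathbb{I}_K$; substituting into $\widetilde a_n+\widetilde b_n$ yields the display. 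I would take that display as given and add only two ingredients: the equal-size hypothesis and the identification with the \isomodel\ objective of Eq.~\eqref{eq:opt_kmeans}.

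First I would impose $|V_n|=|V|/N$. Using $\bm{Q}_n^*=\sqrt{|V|/|V_n|}\,\widehat{\bm{U}}_n\widehat{\bm{D}}_n$ and the orthonormality of $\widehat{\bm{U}}_n$, the second term equals $4\|\widehat{\bm{D}}_n\|_F^2$, which depends only on the observed $\bm{A}_n$ (its leading-$K$ eigenpairs) and not on the assignment $\bm{X}_n$; it is therefore an additive constant as a function of $\bm{X}_n$. The coefficient of the first term collapses to the constant $K/N$. Hence $\mathbb{E}[\eta_n(\bm{X}_n,\bm{W})]$ equals $\tfrac{K}{N}\|\bm{X}_n\bm{W}-\bm{Q}_n^*\|_F^2$ up to an $\bm{X}_n$-independent additive constant, which is the precise content of ``$\widetilde b_n$ no longer depends on $\bm{X}_n$''.

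Next I would rewrite $\|\bm{X}_n\bm{W}-\bm{Q}_n^*\|_F^2$ in terms of the \isomodel\ quantities. Under balance together with $|V_n|=|V|/N$, the spectral data of $\Delta_n\bm{\Theta}\Delta_n$ and $\Delta\bm{\Theta}\Delta$ satisfy $\Delta_n=N^{-1/2}\Delta$, $\bm{Z}_n=\bm{Z}$, and $\bm{D}_n=N^{-1}\bm{D}$, so $\bm{W}=\Delta^{-1}\bm{Z}\bm{D}=\sqrt{N}\,\bm{W}_n\bm{D}_n$ with $\bm{W}_n=\Delta_n^{-1}\bm{Z}_n$ the \isomodel\ mean matrix; combined with $\bm{Q}_n^*=\sqrt{N}\,\widehat{\bm{U}}_n\widehat{\bm{D}}_n$ this gives $\|\bm{X}_n\bm{W}-\bm{Q}_n^*\|_F^2 = N\big\|(\bm{X}_n\bm{W}_n-\widehat{\bm{U}}_n)\widehat{\bm{D}}_n\big\|_F^2$. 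So the expected \model\ per-graph objective is, up to an additive constant, a fixed positive multiple of $\|(\bm{X}_n\bm{W}_n-\widehat{\bm{U}}_n)\widehat{\bm{D}}_n\|_F^2$, i.e. the \isomodel\ k-means objective of Eq.~\eqref{eq:opt_kmeans} read in the coordinate frame rescaled by the fixed invertible matrix $\widehat{\bm{D}}_n$.

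The main obstacle is exactly this last identification. The quantity $\|(\bm{X}_n\bm{W}_n-\widehat{\bm{U}}_n)\widehat{\bm{D}}_n\|_F^2$ is literally proportional to $\|\bm{X}_n\bm{W}_n-\widehat{\bm{U}}_n\|_F^2$ only when $\widehat{\bm{D}}_n$ is a scalar multiple of $\mathbb{I}_K$ (a within-block balanced spectrum); in general one only has $\sigma_{\min}(\widehat{\bm{D}}_n)^2\|\bm{X}_n\bm{W}_n-\widehat{\bm{U}}_n\|_F^2 \le \|(\bm{X}_n\bm{W}_n-\widehat{\bm{U}}_n)\widehat{\bm{D}}_n\|_F^2 \le \|\widehat{\bm{D}}_n\|_2^2\|\bm{X}_n\bm{W}_n-\widehat{\bm{U}}_n\|_F^2$, so the honest reading of ``$\propto$'' is that the two define the same k-means problem up to a fixed linear reparametrization of the embedding and hence target the same partition. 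I would state this explicitly, and likewise flag that, as in Lemma~\ref{lemma:same_prop_pop}, the whole argument is an in-expectation / large-graph statement: the replacements $\gamma_n\mapsto K|V_n|/|V|$, $\Delta^{-1}\Delta_n\mapsto\sqrt{|V_n|/|V|}\,\mathbb{I}_K$, $\bm{Z}_n\mapsto\bm{Z}$, $\bm{D}_n\mapsto N^{-1}\bm{D}$ hold because the per-community counts concentrate, not as exact algebraic identities for a finite realization; these two caveats are where the proof needs the most care.
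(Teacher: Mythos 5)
Your proposal is correct and follows essentially the same route as the paper's own (two-line) proof: start from the expectation display supplied by Lemma~\ref{lemma:same_prop_pop}, discard $\widetilde{b}_n$ as an $\bm{X}_n$-independent additive constant, and use the balance relations $\Delta_n=N^{-1/2}\Delta$, $\bm{Z}_n=\bm{Z}$, $\bm{D}_n=N^{-1}\bm{D}$ to rewrite the surviving term in single-graph coordinates. Where you go beyond the paper is the final identification: the paper's appendix writes $\bm{Q}_n^*=\bm{U}_n\bm{D}$ (using, incidentally, the scaling $\bm{Q}_n^*=\bm{Q}_n\,|V|/|V_n|$ rather than the $\sqrt{|V|/|V_n|}$ defined in the main text) and then asserts the conclusion by ``dropping all constants across graphs,'' which silently absorbs the right-multiplication by the diagonal matrix $\bm{D}$ (your $\widehat{\bm{D}}_n$) into the proportionality sign. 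You are right that this factor is a matrix, not a scalar: $\|(\bm{X}_n\bm{W}_n-\widehat{\bm{U}}_n)\widehat{\bm{D}}_n\|_F^2$ is literally proportional to $\|\bm{X}_n\bm{W}_n-\widehat{\bm{U}}_n\|_F^2$ only when the leading spectrum is flat, and otherwise the two objectives define the same k-means problem only up to a fixed invertible linear reparametrization of the embedding. That honest reading of ``$\propto$,'' together with your caveat that the substitutions for $\gamma_n$, $\Delta^{-1}\Delta_n$, $\bm{Z}_n$, and $\bm{D}_n$ are concentration statements rather than exact finite-sample identities, is exactly the care the paper's own argument omits.
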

\vspace{-6mm}

Lemma \ref{lemma:same_obj} illustrates the scenario when clustering the nodes of each graph individually (Isolated) and jointly have the same solution (i.e., based on optimizing $\bo{Q}_n$). 
However, this is only true with respect to the node assignments for each individual graph. 
If we look to the global assignments, the Isolated and the Joint model are expected to be the same only up to $N\!-\!1$ permutations of the community labels. 
Thus, for a good global clustering result, the Isolated model needs an extra $\mathcal{O}(NK^2)$ step to realign the 
estimated $\bm{X}_n$s, which can also introduce additional error.
If the data does not have graphs of the same size or the distribution of clusters varies across graphs, then the Isolated model will likely miss some blocks on each graph. 
Our joint method avoids these issues by using pooled information across the graphs to improve estimation. See Appendix \ref{subsec:toydata} for a toy data example. Appendix~\ref{sec:consist} also includes a discussion of consistency.

\section{Related work}
Community detection in graphs has seen a lot of recent attention. We focus on extensions to multiple graphs, emphasizing two relevant directions: heterogeneity across communities and nonaligned graphs.
For the former,~\cite{karrer2011stochastic} propose a degree corrected SBM to account for heterogeneity inside a community, though \cite{gulikers2017spectral} showed that this fails to retrieve true communities in a high heterogeneous setting. \cite{ali2018improved} introduce a normalized Laplacian form that account for high heterogeneous scenarios, however this comes at a high computational cost. 
\cite{mucha2010community} work with multiplex networks for time-dependent data, in their case each edge has multiple layers (attributes) which can be viewed as multiple aligned graphs. Our work is in a different domain, we work with {\em non aligned} networks. 

More generally, methodology for multiple graphs can be divided into geometric ~\cite{ginestet2014,asta2014geometric} and model-based ~\cite{durante2016,duvenaud2015convolutional} approaches. The first approach seeks to characterize graphs topologically and explore hyperspace measures. \cite{ginestet2014} introduces a geometric characterization of the network using the so-called Fr\'echet mean while their approaces are mathematically elegant, they are substantially less flexible than our work. The second approach aims to embed graph to a lower dimension space without oversimplifying the problem by making use of latent models.  \cite{durante2016,gomes2018multiple} proposed a mixture of latent space model to perform hypothesis testing on population of binary and weighted aligned graphs, respectively.
On the other hand, \citep{duvenaud2015convolutional} worked with non-aligned graphs, $\bo{X}_{ni} \ne \bo{X}_{n'i}$, modeling node features conditioned on its neighbors. They provided a convolutional neural network approach which is invariant to node permutation. 
%
However, unlike us, their method assumes knowledge of node features $X_{ni}$.
Overall, the closest work to our model is from \cite{muk2017} where a similarity measure is used to align communities across graphs. In our approach, the model does not rely on a distance to compare graph communities as we cluster nodes across graphs jointly. 

\section{Synthetic experiments}
In this section, we evaluate our algorithm for the 
joint SBM model with synthetic data where the ground truth is known. We 
 generate data using the following generative process: 

\vspace{-8mm}
\begin{equation}
\begin{aligned}
\begin{aligned}[l]
\bm{\pi}_n|\alpha  \sim \text{Dir}\left(K,\frac{1}{\alpha K}\right),\\
\end{aligned}
\hspace{.5cm}
&
\hspace{.15cm}
\begin{aligned}[5]
X_{ni} | \bm{\pi}_n  \sim \text{Mult}(\bm{\pi}_n),
\end{aligned}\\
a_{nij}| \bm{X}_{ni},\bm{X}_{nj},\left(\theta_{kl}\right)_{k=1,l=1}^{K,K}  & \stackrel{ind}{\sim} \text{Bern}(\bm{X}_{ni}\bm{\Theta}.\bm{X}_{nj}^{T}) \\
\end{aligned}
\label{eq:gen_synthetic}
\end{equation}

\noindent For all experiments, we use $K=6$ and the same $\bm{\Theta}$, shown in Figure \ref{fig:syn_theta_dist} (left).  %
We vary hyperparameters including the number of graphs $N$, the individual graph sizes $|V_n|$, and the heterogeneity of clusters sizes in each graph $\alpha$. Recall that $\alpha \approx 0$ corresponds to a homogeneous setting (similar $\bm{\pi}$ across all graphs), and that as $\alpha$ increases the clusters become more heterogeneous. 

We evaluate the performance of our joint spectral clustering algorithm (\jspec), and compare against separately running spectral SBM on the individual graphs and then aligning (\isospec). We also include as a baseline \texttt{node2vec}~\cite{node2vec-kdd2016}, which embeds the nodes into a low-dimensional vector space and clusters the embeddings. 
For \isospec, we used the {\texttt{Blockmodels}} R package~\cite{leger2016blockmodels}, and for \texttt{node2vec} we used a Python implementation~\cite{node2vec-kdd2016}.

First, we assess community retrieval performance and global $\bm{\Theta}$ estimation for each model. We design two sets of experiments, one for each assessment objective: %

\vspace{-4mm}
\begin{enumerate}
	\item Community retrieval $\widehat{\bm{X}}_n$:
\vspace{-2mm}
	\begin{enumerate}
	\item Fixed $\alpha\!=\!1$. $N\!=\![5,50,100,200]$. For each $N$, $|V_n|\!=\![25,50,100,200,500]$.
	Figure \ref{fig:syn_theta_dist}(right) shows the $\pi_n$s, i.e. the proportion of blocks for each graph per scenario;
	\item Fixed $N \!=\!100$ and $|V_n| \!=\! 200$. We generate datasets for varying values of $\alpha \in [0.1,2]$; 
\vspace{-2mm}
\end{enumerate}
    \item Global $\widehat{\bm{\Theta}}$:
\vspace{-2mm}
    \begin{enumerate}
    	\item Fixed $\alpha\!=\!1$. $N\!=\![50,200,400,600,800,1000]$. For each $N$, $|V_n|\!=\![25,50,200,500]$.
    \end{enumerate}
\end{enumerate}
\vspace{-2mm}
\begin{figure}[htbp]
\begin{minipage}{0.47\textwidth}
\begin{figure}[H]
    \centering
    \includegraphics[height=4.5in]{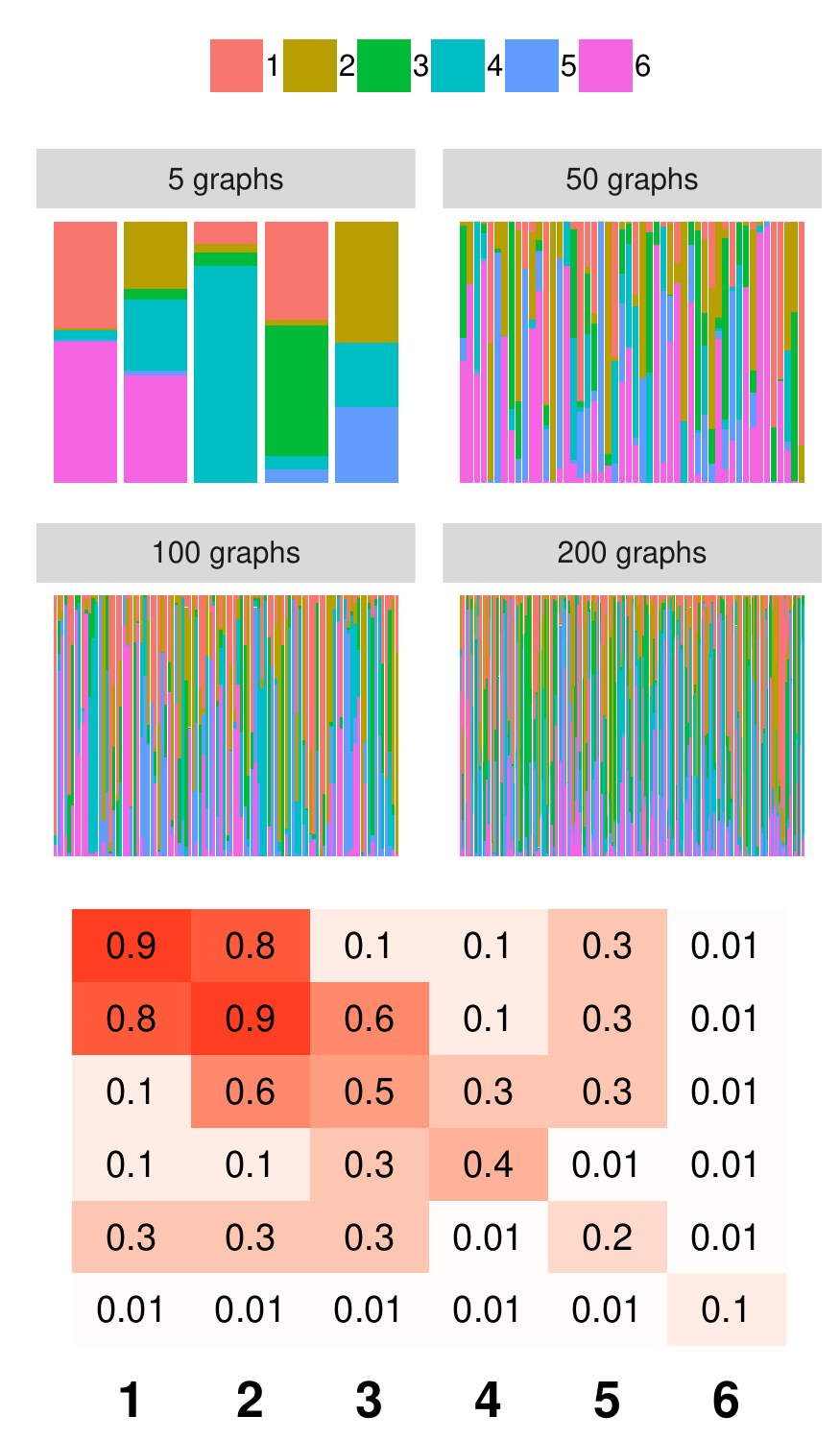}
	\caption{Distribution of blocks for $5$, $50$, $100$, $200$ graphs (top) and synthetic data: True connectivity matrix $\bm{\Theta}$ (bottom).}
	\label{fig:syn_theta_dist}
\end{figure}
\end{minipage}
\begin{minipage}{0.47\textwidth}
\begin{algorithm}[H]
\caption{JointSpec SBM}\label{algo:jointSpec2}
 \begin{algorithmic}
 \STATE \textbf{Input}  $N$ adjacency matrices $\bm{A}_n  \in \{0,1\}^{|V_n| \times |V_n|}$, number of communities $K$ and tolerance  $\varepsilon$
 \FOR{$n \in [1,...,N]$}
 \STATE \textbf{Compute} $\widehat{\bo{U}}_n \in \mathbb{R}^{|V_n| \times K} $ and $\widehat{\bo{D}}_n \in \mathbb{R}^{K \times K}$ as the leading $K$  eigenvectors and eigenvalues of $\bo{A}_n $. \\
\STATE \textbf{set} $\widehat{\bo{Q}}^*_n \gets |\widehat{\bo{U}}_n\widehat{\bo{D}}_n|\sqrt{\frac{|V|}{|V_n|}}$
\STATE \textbf{Initialize} 
$\widehat{\bm{X}}_n$ randomly\\
\STATE \textbf{set}  $\widehat{\Delta}_n^2 \gets \widehat{\bm{X}}_n^T\widehat{\bm{X}}_n$ \\
\ENDFOR
\STATE \textbf{compute}  $\widehat{\Delta}^2 \gets \sum_n^N \widehat{\Delta}_n^2$ \\
\STATE \textbf{set}  $\text{loss}_0 \gets 0$,  $\text{loss}_1 \gets 1$ and $t \gets 1$ \\
 \WHILE{$|\text{loss}_{t} - \text{loss}_{t-1}| > \varepsilon$}
 \STATE \textbf{update}  $t \gets t + 1$ \\
 \FOR{$n \in [1,...,N]$}
 \STATE \textbf{compute}  $\gamma_n \gets \text{tr}(\widehat{\Delta}_n^2\widehat{\Delta}^{-2}) $ \COMMENT{Eq. \eqref{eq:def_gamman}}
\ENDFOR
 \STATE \textbf{update} $\widehat{\bo{W}}$ \COMMENT{Eq. \eqref{eq:w_update_rule}}
 \FOR{$n \in [1,...,N]$}
  \FOR{$i \in [1,...,|V_n|]$}
    \FOR{$k \in [1,...,K]$}
      \STATE \textbf{compute} $\tilde{\Delta}_{ni}(k)$ \COMMENT{Eq. \eqref{eq:def_gammaTilde}}
       \STATE \textbf{compute} $\omega_{ni}(k)$ \COMMENT{Eq. \eqref{eq:def_omega2}}
    \ENDFOR
    \STATE \textbf{update} $\widehat{\bo{X}}_{ni} \gets \text{onehot} (\argmin_k \omega_{ni}(k))$
    \ENDFOR
    \STATE \textbf{compute}  $\widehat{\Delta}_n^2 \gets \widehat{\bm{X}}_n^T\widehat{\bm{X}}_n$ \\
    \STATE \textbf{update}  $\eta_n(\widehat{\bm{X}}_n,\widehat{\bm{W}})$ \COMMENT{Eq. \eqref{eq:opt_kmeans_graphLevel_v3}} \\
\ENDFOR
 \STATE \textbf{compute}  $\widehat{\Delta}^2 \gets \sum_n^N \widehat{\Delta}_n^2$ \\
 \STATE \textbf{compute}  $\text{loss}_t\gets \sum_n^N \eta_n(\widehat{\bm{X}}_n,\widehat{\bm{W}})$
 \ENDWHILE
 \end{algorithmic}
\end{algorithm} 
\end{minipage}
 \end{figure}

We measured performance both quantitatively, using Normalized Mutual Information (NMI) and Standardized Square Error (SSE), and qualitatively, by visualizing the estimated connectivity matrix for each approach. 
For multiple graph datasets, we measure individual graph NMIs as well 
as the overall NMI across all graphs, in each case, comparing the estimated membership matrices $\widehat{\bo{X}}_n$ with the ground truth $\bo{X}_n$. 
To measure the quality of the estimated connectivity matrix $\bm{\Theta}$, we used the standardized square error, the square error normalized by the true variance. Thus, 

\vspace{-5mm}
\begin{equation}
SSE = \sum_{k}^K\sum_{l}^K\frac{(\widehat{\theta}_{kl}-\theta_{kl})^2}{\theta_{kl}(1-\theta_{kl})}
\label{eq:se_var_def}
\end{equation}
\vspace{-5mm}

We normalize the square error by the true variance because very high (or very low) edge probabilities have lower variances and need to be up-weighted accordingly. For good $\bm{\Theta}$ estimates, we expect Eq.~\eqref{eq:se_var_def} to be close to zero. 

\isospec\ and \texttt{node2vec} are by construction nonaligned. %
In order to {align} them, we 
(1) rank each community on each graph based on  $\diag(\bm{\Theta}_n)$, then (2) re-order the connectivity matrix and membership accordingly. 

\vspace{-2mm}
\subsection{Results}

\paragraph{Community retrieval $\widehat{\mathbf{X}}_n$, fixed $\alpha=1$:}
Figure \ref{fig:syn_results1}(left) shows the NMI curves for increasing number of nodes for different numbers of graphs (top row: individual NMIs, and bottom row: overall NMI). For individual NMIs, each data point records the median over the graphs and the shaded region shows the interquartile range. 
We see that \jspec\ outperformed \isospec\ and \texttt{node2vec} for both overall and individual NMI. That the overall NMI for \isospec and \texttt{node2vec} is poor is unsurprising, given the two-stage alignment procedure involved. 
Interestingly however, they perform poorly on the individual NMIs as well. 
This indicates that regardless the choice of realignment, the use of \textit{only local information is not enough to accurately assign nodes to clusters in multiple graph domains}, and that it is important to pool statistical information across graphs. 
\begin{figure}[htbp]
  \centering
  \includegraphics[width=.95\textwidth]{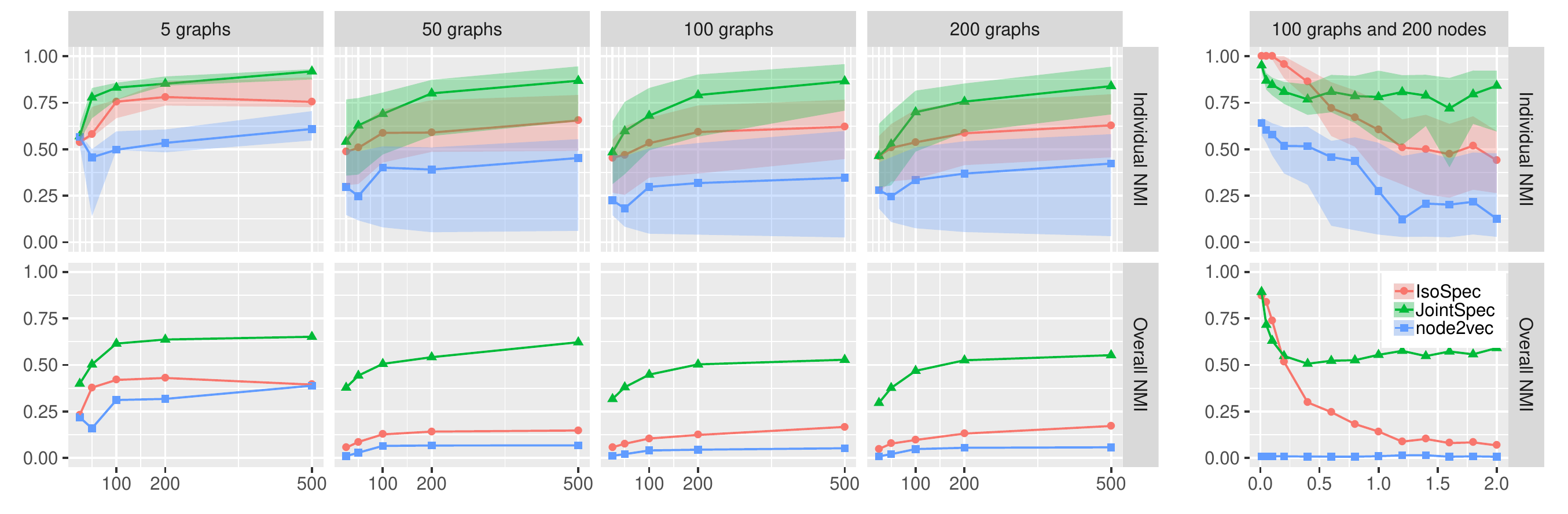}
\vspace{-1mm}
  \caption{\textbf{(Left)} Fixed $\alpha$: NMI for each scenario ($5$, $50$, $100$, $200$ graphs) for increasing number of nodes ($25$, $50$, $100$, $200$ and $500$). \textbf{(Right)} Fixed $N=100$ and $|V_n|=200$: NMI curves for each model for increasing heterogeneity ($\alpha$). Top row: median and the interquartile range curves of the individual NMIs. Bottom row: overall NMI curves.}
  \label{fig:syn_results1}
\end{figure}

In terms of estimating the connectivity matrix, \texttt{node2vec} performs worst and the \jspec\ estimates are the most similar to the true connectivity. Figure \ref{fig:thetas} (top row) shows the estimated connectivity matrix for each approach for the case of $200$ graphs, $500$ nodes per graph.  

\begin{figure}[h]
\begin{minipage}{0.45\textwidth}
\begin{figure}[H]
	\centering
	\includegraphics[width=2.9in]{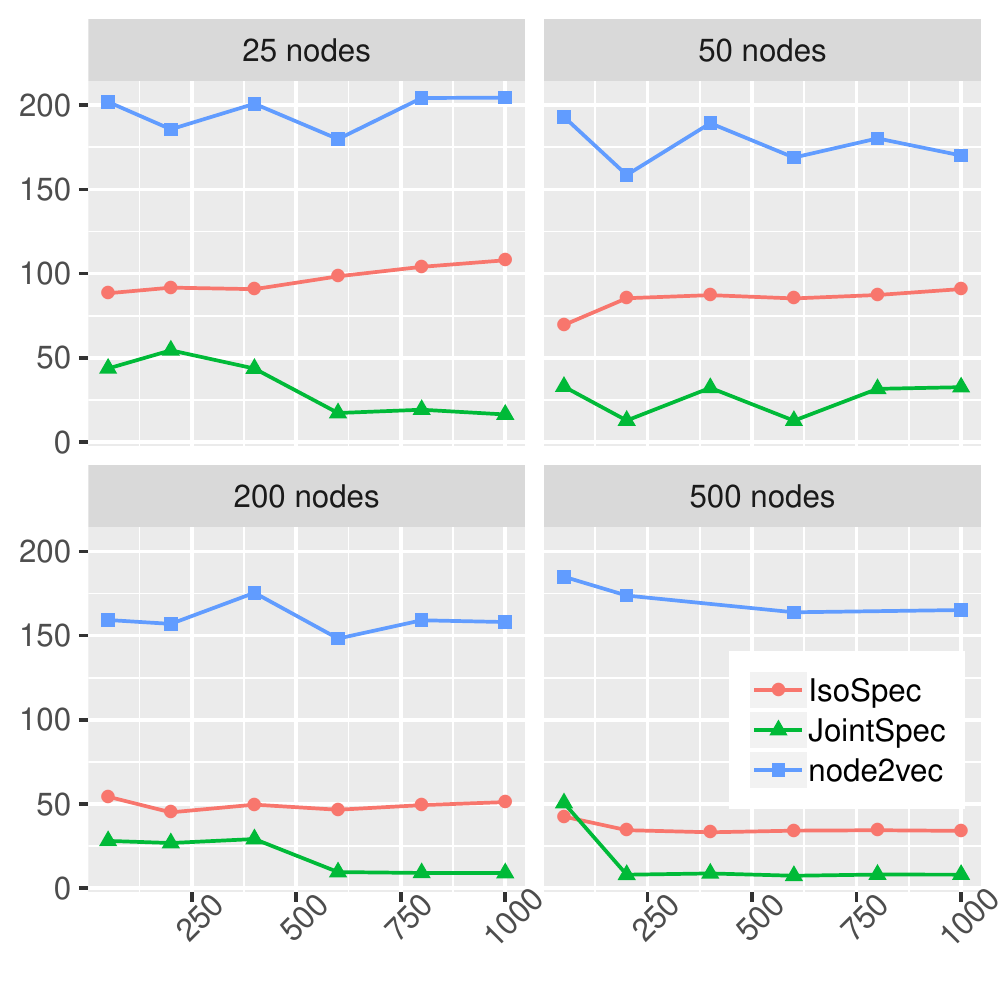}
\vspace{-6mm}
	\caption{
	Standardized square error of $\bm{\Theta}$ (Eq.~\eqref{eq:se_var_def}) for increasing number of graphs $N$, for $|V_n| = 25, 50, 200, 500$.}
	\label{fig:consistency_theta}
\end{figure}
\end{minipage}
\hspace{5mm}
\begin{minipage}{0.45\textwidth}
\begin{figure}[H]
	\centering 
	\includegraphics[width=\textwidth]{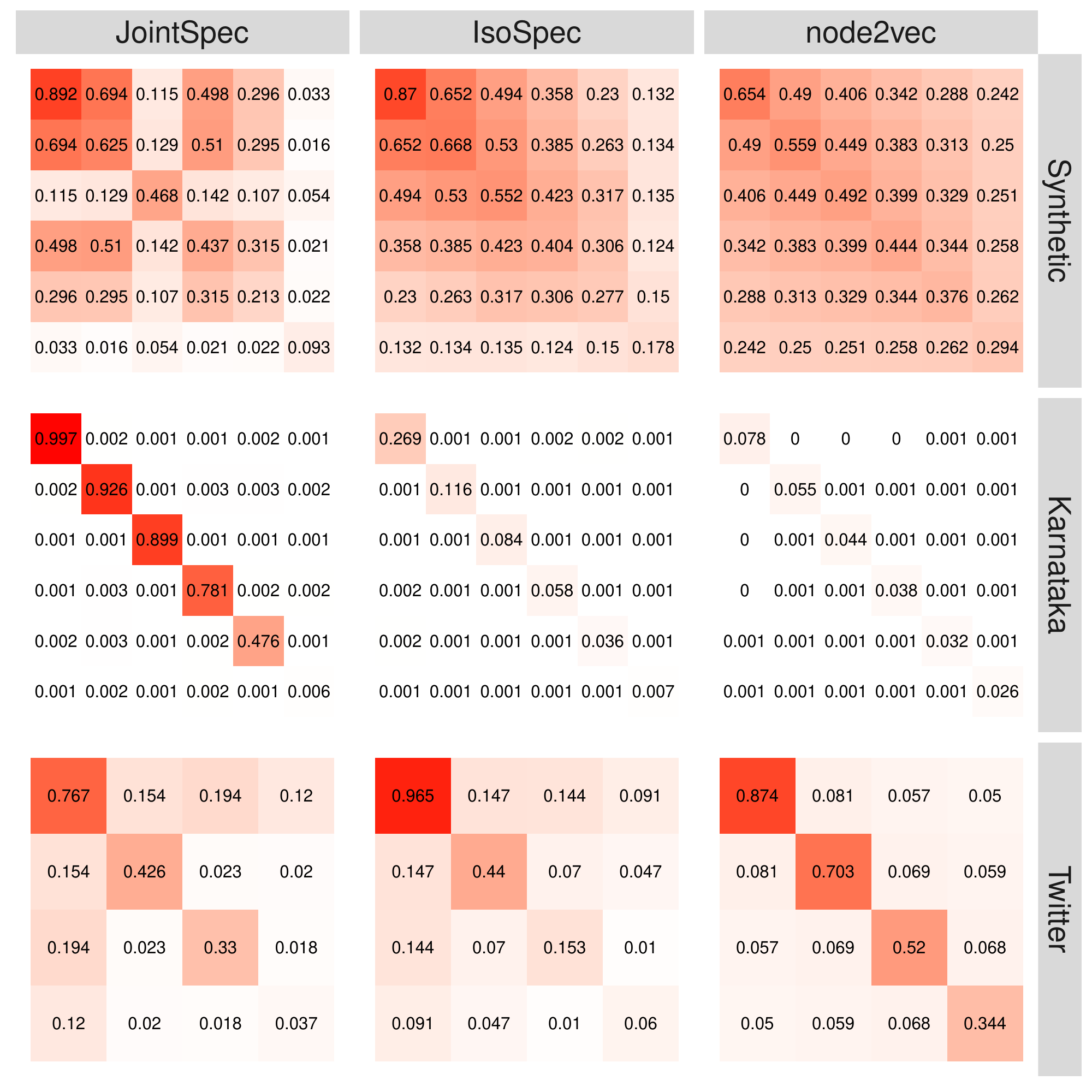}
\vspace{-4mm}
	\caption{Connectivity matrix estimated ($\widehat{\bm{\Theta}}$) by each approach  (columns) and for each dataset (rows) (synthetic data, Karnataka villages, and Twitter). }
	\label{fig:thetas}
\end{figure}
\end{minipage}
\end{figure}
\vspace{-2mm}
\paragraph{Community retrieval $\widehat{\mathbf{X}}_n$, fixed $N\!=\!100$ and $|V_n|\!=\!200$:}
Here, we evaluate cluster retrieval performance as the cluster sizes became more heterogeneous (e.g. $\alpha$ increases). 
Figure \ref{fig:syn_results1}(right) shows the NMI curves. The results %
suggest that \isospec\  and \jspec\ have similar performance when the graphs are balanced in terms of distribution of nodes over communities (low values of $\alpha$). However, the NMI curves diverge as $\alpha$ increases and for more heterogeneous settings (unbalanced graphs), the Joint model outperforms \isospec\ and \texttt{node2vec} by a large amount, both for overall and individual cluster retrieval performance. 

Furthermore, we consider more complex re-alignment procedures, and two additional cluster retrieval performance measures (adjusted Rand index and misclutering rate). Results are in accordance with the ones shown in Figure \ref{fig:syn_results1}(right) in  which \model outperforms the Isolated models. Appendix \ref{subsec:realign_measures} presents those results.

We also investigated settings where the individual graph sizes varied over the dataset. To generate the data we used an overdispersed negative binomial distribution to sample graph sizes. The results are in Appendix \ref{subsec:experiments_3}. Overall, we found that varying $\alpha$ had more impact on cluster retrieval performance than varying the spread of graph sizes. 

\paragraph{Global $\widehat{\mathbf{\Theta}}$:}
Figure \ref{fig:consistency_theta} shows the standardized square error (SSE)~\eqref{eq:se_var_def} for the $\bm{\Theta}$ estimates. %
(lower values mean better estimates of the true $\bm{\Theta}$). Again, \jspec\ outperforms \isospec\ and \texttt{node2vec} in all scenarios. Even when each graph does not have many nodes (e.g. $|V_n|=25$), statistical pooling allows \jspec\ to achieve good performance that is comparable with settings with larger nodes. These results also illustrate the consistency of our estimation scheme, with decreasing error as the number of samples increases. %

\section{Real world experiments}

We consider two datasets: a dataset of Indian villages, and a Twitter 
dataset from the political crisis in Brazil.

\vspace{-4mm}
\paragraph{Karnataka village dataset}
This\footnote{available at https://goo.gl/Vw66H4}  consists of a household census of  $75$ villages in Karnataka, India. %
Each village is a graph, each person is a node, and edges represent if one person went to another person's house or vice versa. Overall, we have $75$ graphs varying in size from $354$ to $1773$ nodes. 

Figure \ref{fig:vil_assignments} (top) shows some demographics proportions for each village: these are highly heterogeneous. Looking at 
Caste, for instance,  we have that most villages consist of ``OBC'' and ``Scheduled Caste'', but some have a very large proportion of people 
identified as ``General''.
Religion and Mother-tongue seem to have same behavior. Overall, %
we expect that Lemma \ref{lemma:same_prop_pop} will not hold for this scenario, and expect our approach to be better suited than \isospec. 

\begin{figure}[h]
\begin{minipage}{.45\textwidth}
\begin{table}[H]
	\caption{Overall and Individual (mean and interquartile range) NMIs for Karnataka villages dataset.}
	\label{tab:nmi_village}
	\centering
	\small
	\begin{tabularx}{\textwidth}{l|cc}
	\hline

	\hline
	\multirow{2}{*}{\textbf{Model}} & \textbf{Overall} & \textbf{Individual}  \\
	& \textbf{NMI}& \textbf{NMI} \\
	\hline
	JointSpec & $\textbf{0.134}$ & $\textbf{0.6126}$ $\textbf{[0.211,1]}$\\
	IsoSpec & $0.091$ & $0.4995$ $[0.191,0.789]$\\
	node2vec & $0.113$ & $0.43807$ $[0.052,0.744]$\\
	\hline
	\end{tabularx}
\end{table}
\begin{figure}[H]
	\centering
	\includegraphics[width=\textwidth]{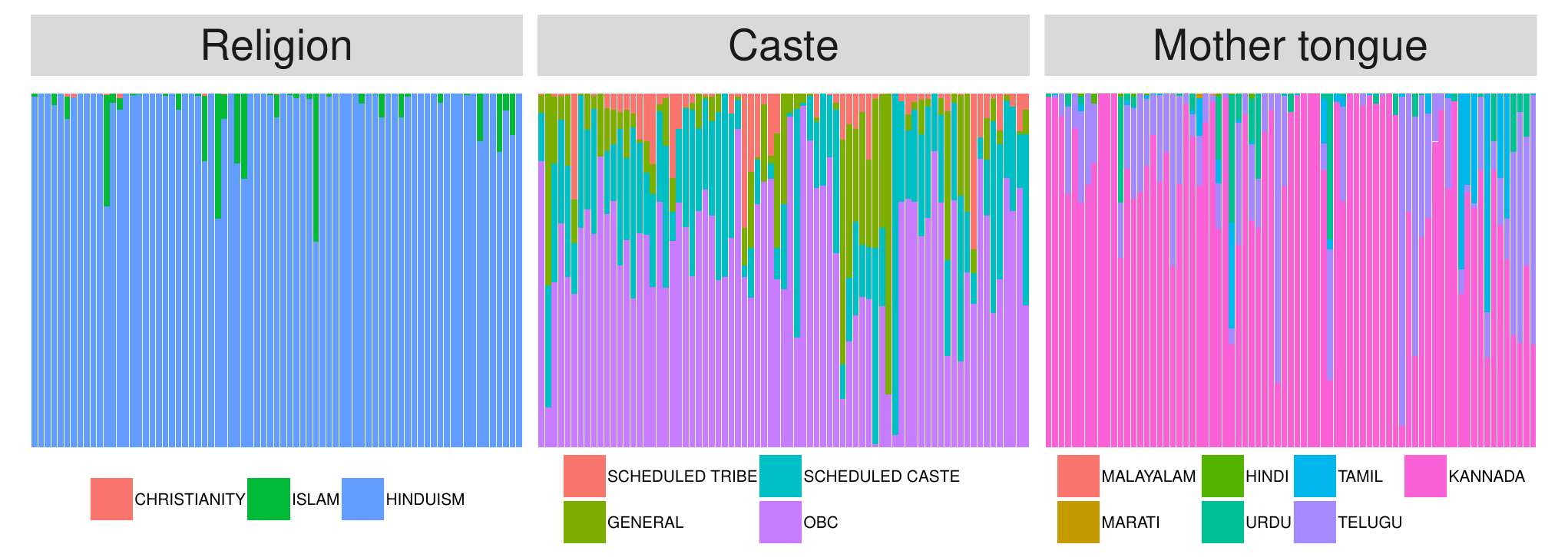}
\vspace{-6mm}
	\caption{Karnataka villages demographics proportions  for Religion, Caste and Mother tongue.}
	\label{fig:vil_assignments}
\end{figure}
\end{minipage}
\hspace{5mm}
\begin{minipage}{.45\textwidth}
\begin{table}[H]
	\caption{Top $3$ words assigned to communities by each model and each side (pro and against government)}
	\label{tab:words}
	\centering
	\scriptsize
	\begin{tabular}{l|l|l|l|l}
	\hline
	\hline
	\multirow{2}{*}{$\bm{K}$} & \multicolumn{2}{c}{\textbf{JointSpec}}  & \multicolumn{2}{c}{\textbf{IsoSpec}} \\
	& \color{red}{Pro} & \color{blue}{Against} & \color{red}{Pro} & \color{blue}{Against}\\
	\hline
		\multirow{3}{*}{1} & \color{red}{naovaitergolpe} & \color{blue}{foradilma}& \color{red}{naovaitergolpe} & \color{blue}{foradilma}\\
   						   &          \color{red}{golpe} & \color{blue}{forapt}&           \color{red}{golpe}&    \color{blue}{forapt}\\
    					   &      \color{red}{dilmafica} & dilma &       \color{blue}{foradilma}&     dilma\\
    	\hline
		\multirow{3}{*}{2} &       hora		& janaiva		&       turno  &         arte   \\
   						   &     galera		& compartilhar			&       venceu &      objetos   \\
    					   & democralica	& faltam		&  anavilarino &    apropriou   \\
    	\hline
		\multirow{3}{*}{3} &      sociais	&        mito	&      rua  &        rua \\
   						   &   \color{red}{coxinha}	&\color{blue}{elite}	&     april &     \color{red}{coxinha} \\
    					   &     naonaors	&compartilhar	& continuar &      \color{blue}{elite} \\    
    	\hline
		\multirow{3}{*}{4} &    vai			& \color{blue}{lulanacadeia}	&      \color{red}{dilmafica} &impeachment \\
   						   & brasil			& lula			&      dilma&       brasil\\
    					   &   povo			& vai 			& \color{blue}{forapt}& \color{blue}{lulanacadeia}\\
	\hline

	\hline
	\end{tabular}
\vspace{-2mm}
\end{table}
\end{minipage}
\end{figure}

We consider a conservative setting of six communities ($K = 6$) for all models. Here, ground truth is unknown and we cannot assess the performance of the models precisely. 
To quantitatively measure performance, we compare the demographic variables presented earlier (religion, caste and mother-tongue) with the communities assigned using each model. %
Table \ref{tab:nmi_village} shows the NMIs computed in this setting, we can see that \jspec\ outperformed the baselines on both the overall NMI and the Individual NMI. 
%
%
We see in Figure \ref{fig:thetas} (middle row) that the connectivity matrices estimated using IsoSpec and node2vec are noisy and weakly structured, making interpretation difficult. The connectivity matrix that our method estimated presents a very strong structure, where the within-community probability is very large and across-community is very low.

%
%
%
%
%
%
%

%
%
%
%
%
%

%
%
%
%
%
%

%

\vspace{-2mm}
\paragraph{Twitter}
We crawled Twitter from April 6th to May 31st 2016 to 
collect tweets  from hashtags of $7382$ users from both sides of the political crisis in 
Brazil. 
One side was for the impeachment of the former president, Dilma Rousseff, with the opposition describing the process as a coup. 
We constructed graphs based on the word co-occurrence matrices forming edge-links between words which were co-tweeted more than 20\% of the time for a given user. Thus, networks represent users and nodes represent words.
The resulting networks are somewhat homogeneous after controlling for the side the user is in. The sizes of the graphs vary from $25$ to $1039$ nodes. 

We used four communities ($K = 4$) in this experiment, and Figure \ref{fig:thetas} (bottom row) shows the estimated connectivity matrices. Now, on the surface, both \jspec\ and \isospec\ have similar behavior, with \texttt{node2vec} estimating an extra cluster.
Despite this superficial similarity, the words assigned to communities by the two models differ significantly. 
In order to assess words for each community, we split users between the two sides (pro and against government), and assign each word to its most frequent community across graphs. Table \ref{tab:words} shows the top three words per community per model. We colored words based on whether it is a more pro government (red), against it (blue) or neutral (black). For \jspec, we see that community 1 has important key arguments per side such as ``naovaitergolpe'' (no coup)  and ``foradilma''(resign dilma). We also see that community 2 seems to have some stopwords such as ``hora''(time) and ``compartilhar'' (share), and community 3 consist of aggressive and pejorative terms each side uses against the other like ``elite'' and ``coxinha''. For \isospec, no such inferences are made, and words do not seem to reflect a clear pattern. \texttt{node2vec} had the least informative clustering performance, where the top words for the different communities were all neutral. %

Since there is partial alignment across the Twitter graphs (i.e., words can appear as nodes in multiple graphs), we can assess the consistency of the clustering approaches by evaluating the entropy of community assignments across the graphs (per word). Although a word can be used in multiple contexts, we expect that a ``good'' clustering will assign words mostly to the same clusters, therefore resulting in a lower entropy. The results are in Appendix \ref{subsec:experiments_4}. Overall, we found that \jspec\ had the lowest entropy. 

%

%
%
%
%
%
%
%
%
%
%
%
%
%
%
%
%
%
%
%
%

%
%
%
%
%
%
%
%
%
%
%
%

%
%
%
%
%
%
%
%
%
%
%
%
%

%
%
%
%

%
%

%
%
%
%
%
%

%

%
%
%
%
%
%
%
%
%
%
%
%

%
%

%
%
%
%
%

%
%
%
%
%
%
%
%
%
%
%

%
%
%
%
%
%
%
%
%
%
%
%
%
%
%
%
%
 %
 %
%
%

%

%
%
%
%

%

%
\section{Conclusion}

In this work, we consider the problem of multiple graph community detection in a heterogeneous setting. We showed that we need to jointly perform 
stochastic block model decompositions in order to be able to estimate a reliable global structure. We compared our methods with a two-step approach we 
called the \isomodel, and \texttt{node2vec}. Our method outperformed the baselines on global measures (overall NMI and SSE of the connectivity matrices), but interestingly also on local measures (individual NMI). This demonstrates that our method is more accurately able to assign nodes to clusters regardless of the choice of re-alignment procedure. 
Overall, the \isomodel \ does not pool global information in the inference step which indicates that it can only be used in highly homogeneous scenarios. 

\bibliography{mybib}
\newpage
\clearpage

\newpage
\section{Appendix}
\subsection{Notations}
\vspace{-2mm}
\begin{table}[h!]
\centering
  \begin{tabularx}{.5\textwidth}{c|l}
  \hline
    $K$ & number of communities \\
    $\bm{\Theta}$ & $K\times K$ connectivity matrix \\
    $V_n$ & Nodes in graph $n$ \\
    $V$ & $\bigcup_{n=1}^N V_{n}$\\
    $G_{nk}$ & Nodes in graph $n$ and in community $k$\\
    $G_{\cdot k}$ & $\bigcup_{n=1}^N G_{nk}$\\
    $\bo{A}_n$ & $|V_n| \times |V_n|$ adjacency matrix of graph $n$ \\
    $\bo{P}_n$ & $|V_n| \times |V_n|$  edge probabilities matrix of graph $n$\\
    $\bo{X}_n$ & $|V_n| \times K$ membership matrix of graph $n$\\
    $\Delta_n$ & $\left(\bo{X}_n'\bo{X}_n\right)^{1/2}$\\
    $\bo{U}_n$ & $|V_n| \times K$ matrix of eigenvectors of $P_n$ \\
    $\bo{A}$ & $|V| \times |V|$  block diagonal matrix of all  $\bm{A}_n$\\
    $\bo{P}$ & $|V| \times |V|$ matrix of all edge probabilities \\
    $\bo{X}$ & $|V| \times K$ of all membership matrices stacked \\
    $\Delta$ & $\left(\bo{X}'\bo{X}\right)^{1/2}$\\
    $\bo{U}$ & $|V| \times K$ matrix of $\bo{U}_n$s stacked \\
    $\|\cdot\|$ & Euclidean (vector) and spectral (matrix) norm  \\
    $\|M\|_{F}$ & Frobenius norm of matrix $M$ \\
    $\|M\|_{0}$ & The $l_0$-norm of matrix $M$ \\
  \hline
  \end{tabularx}
  \caption{Notation}
  \label{tab:notation}
\end{table}

\subsection{Derivation of Eq.\eqref{eq:theta_estimator_n} and unbiasedness}
\label{subsec:theta_n_details}
Recall that the connectivity matrix $\bm{\Theta}$ consists of edge probability for within and between communities. Now, say $\theta_{kl}$ is the element of $\bm{\Theta}$ on the $k$th row $l$-th column. Thus, one can estimate $\theta_{kl}$ by counting the number of edges between communities $k$ and $l$ and dividing by the total number of possible edges. For $k\ne l$, the total number of possible edges is the number of nodes in $k$ multiplied by the number of nodes in $l$. For a adjacency matrix $\bm{A}_n$, we can generalize the estimation of the connectivity matrix using matrix notation as
\begin{equation}
\widehat{\bm{S}}_{n} = [\widehat{\bm{X}}_n^{T}\widehat{\bm{X}}_n]^{-1}\widehat{\bm{X}}_n^T\bm{A}_n\widehat{\bm{X}}_n[\widehat{\bm{X}}_n^{T}\widehat{\bm{X}}_n]^{-1}
\label{eq:theta_wrong}
\end{equation}
If assume we know the true membership matrix (i.e. $\widehat{\bm{X}} = \bm{X}_n$), the off-diagonal elements of $\widehat{\bm{S}}_{n}$ in Eq. ~\eqref{eq:theta_wrong} above have unbiased estimates, however the diagonal elements (i.e. the within community probability) are biased. More specifically, the total possible number of edges for nodes in the same communities is being assumed to have self loops which is incorrect in this setting, and also the term $\widehat{\bm{X}}_n^T\bm{A}_n\widehat{\bm{X}}_n$ is double counting the edges within communities. Formally,
\begin{align}
\mathbb{E}\left[\widehat{\bm{S}}_n\right]& = \Delta_n^{-2}\left(\bm{X}_n^T\bm{P}_n\bm{X}_n-\bm{X}_n^T\diag\left(\bm{P}_n\right)\bm{X}_n\right)\Delta_n^{-2} \nonumber\\
& = \bm{\Theta} - \Delta_n^{-2}\diag\left(\bm{\Theta}\right)
\label{eq:unbiased}
\end{align}
Here, we are using the fact that $\mathbb{E}[\bm{A}_n] = \bm{P}_n - \diag\left(\bm{P}_n\right)$. 
Furthermore, we can construct an unbiased estimator for $\bm{\Theta}$ by adding the following term to each diagonal element of $\widehat{\bm{S}}_n$: 
\begin{equation}
\frac{\text{number of edges in cluster $k$}}{|G_{nk}|{|G_{nk}| \choose 2}}
\label{eq:add_term}
\end{equation}
where $|G_{nk}|$ is the number of nodes in community $k$. For all $k$, we have Eq.\eqref{eq:add_term} in matrix notation as
\begin{equation}
\Delta_n^{-2}\left[\mathbb{I}_k-\Delta_n^{-2}\right]^{-1}\diag\left(\widehat{\bm{S}}_n\right)
\label{eq:matrix_term}
\end{equation}
Now, it follows from Eq.~\eqref{eq:unbiased} that
\begin{align}
\mathbb{E}\left[\diag\left(\widehat{\bm{S}}_n\right)\right]&= \left[\mathbb{I}  - \Delta_n^{-2}\right]\diag(\bm{\Theta}) %
\label{eq:expected_term}
\end{align}
Using Eqs.\eqref{eq:theta_wrong} and \eqref{eq:matrix_term}, we get the expression in Eq.\eqref{eq:theta_estimator_n}. And using Eq.~\eqref{eq:unbiased} and ~\eqref{eq:expected_term}, we have
\begin{equation}\mathbb{E}\left[\widehat{\bm{\Theta}}_n\right] = \bm{\Theta}\label{eq:expected_estimator_n}
\end{equation}
We also have that
\begin{equation}
  \text{Var}\left(\widehat{\bm{\Theta}}_n\right)= 
\begin{cases}
    \frac{\theta_{kl}(1-\theta_{kl})}{|G_{nk}||G_{nl}|},& \text{off-diagonal elements}\\
    \frac{\theta_{kk}(1-\theta_{kk})}{{|G_{nk}|\choose 2}},& \text{diagonal elements}
\end{cases}
\label{eq:theta_estimator_n_variance}
\end{equation}
where $|G_{nk}|$ is the number of nodes of graph $n$ in cluster $k$.

\subsection{Proof of Lemma~\ref{lem:kmean_opt} } %
\label{subsec:opt_kmeans_graphLevel_details_v2}

\textit{\textbf{Proof}}
From Equation~\eqref{eq:global_local_expression}, and since $\bo{X}_n = \bo{X}_{n*}$, we have: 
\allowdisplaybreaks{
\begin{align*}
0&= \bm{X}_{n}\bo{W} - \bo{Q}_n\bm{Z}_n^{T}\Delta_n^{-1}\Delta\bo{Z} \\
&= \bm{X}_{n}\bo{W}\bo{Z}^T \Delta^{-1} \Delta_n \bm{Z}_n - \bo{Q}_n \\
&= \bm{X}_{n}\bo{W}\bo{Z}^T \Delta^{-1} \Delta_n \bm{Z}_n - \bo{Q}_n \sqrt{\frac{|V|}{|V_n|}} \sqrt{\frac{|V_n|}{|V|}} \mathbb{I}_K\\
&= \bm{X}_{n}\bo{W}\bo{Z}^T \Delta^{-1} \Delta_n \bm{Z}_n - \bo{Q}_n^* \sqrt{\frac{|V_n|}{|V|}} \mathbb{I}_K \\
&= \bm{X}_{n}\bo{W}\bo{Z}^T \Delta^{-1} \Delta_n \bm{Z}_n - \bo{Q}_n^* \bo{Z}^T \Delta^{-1} \Delta_n \bm{Z}_n + \bo{Q}_n^* \bo{Z}^T \Delta^{-1} \Delta_n \bm{Z}_n - \bo{Q}_n^* \sqrt{\frac{|V_n|}{|V|}} \mathbb{I}_K\\
&= \left(\bm{X}_{n}\bo{W} - \bo{Q}_n^* \right) \bo{Z}^T \Delta^{-1} \Delta_n \bm{Z}_n+ \bo{Q}_n^* \left(\bo{Z}^T \Delta^{-1} \Delta_n \bm{Z}_n - \sqrt{\frac{|V_n|}{|V|}} \mathbb{I}_K \right)
\end{align*}
}

\vspace{-4mm}
\noindent where $\bo{Q}_n^* := \bo{Q}_n\sqrt{\frac{|V|}{|V_n|}}$. {Recall that $\bo{Q}_n$ corresponds to the data from a single graph. $\bo{Q}_n^*$ is then a weighted version, based on the relative number of nodes in the graph.}

From this we can transform Eq.~\eqref{eq:opt_kmeans_graphLevel} to 
\vspace{-3mm}
\begin{equation*}
  \hspace{-2mm}\argmin_{\substack{\bm{X} \in \mathcal{M}_{|V|,K} \\ \bm{W} \in \mathbb{R}^{K \times K}}}  \sum_{n=1}^N \left\|a_n(\bm{X}_n,\bm{W})  + b_n(\bm{X}_n) \right\|^2_F.
\end{equation*}

\vspace{-3mm}
\noindent where

\vspace{-7mm}
$$
\begin{aligned}
a_n(\bm{X}_n,\bm{W}) &:= \left( \bm{X}_{n}\bo{W} - \bo{Q}_n^* \right)\bm{Z}^{T}\Delta^{-1}\Delta_n\bm{Z}_n\\
b_n(\bm{X}_n) &:= \bo{Q}_n^*\left(\bm{Z}^{T}\Delta^{-1}\Delta_n\bo{Z}_n - \sqrt{\frac{|V_n|}{|V|}}\mathbb{I}_K\right) \\
\end{aligned}
$$
\vspace{-3mm}
$\hfill  \qquad \qquad \qquad \qquad \qed $

\subsection{Proof of Lemma~\ref{lem:kmean_opt2}}
\label{subsec:opt_kmeans_graphLevel_details}

\textit{\textbf{Proof}}
\begin{align}
 \frac{1}{2}\left\|a_n(\bm{X}_n,\bm{W})  + b_n(\bm{X}_n) \right\|^2_F &\le  \|a_n(\bm{X}_n,\bm{W})\|_F^2 +  \|b_n(\bm{X}_n)\|^2_F \nonumber \\
& \le \left\| \bm{X}_{n}\bo{W} \!\!-\! \bo{Q}_n^* \right\|_F^2\left\|\bm{Z}^{T}\Delta^{-1}\Delta_n\bm{Z}_n\right\|_F^2 +  \|b_n(\bm{X}_n)\|^2_F  \nonumber \\
&= \left\| \bm{X}_{n}\bo{W} \!\!-\! \bo{Q}_n^* \right\|_F^2 \gamma_n + \left\|\bo{Q}_n^*\bm{Z}^T\Delta^{-1}\Delta_n\bm{Z}_n - \bo{Q}_n^*\sqrt{\frac{|V_n|}{|V|}} \right\|_F^2 \nonumber \\
&\le \left\| \bm{X}_{n}\bo{W} \!\!-\! \bo{Q}_n^* \right\|_F^2 \gamma_n + \left\| \left|\bo{Q}_n^*\right|\Delta^{-1}\Delta_n + \left|\bo{Q}_n^*\right| \sqrt{\frac{|V_n|}{|V|}} \right\|_F^2 \nonumber \\
&= \left\| \bm{X}_{n}\bo{W} \!\!-\! \bo{Q}_n^* \right\|_F^2 \gamma_n + \left\| \left|\bo{Q}_n^*\right| \left( \Delta^{-1}\Delta_n + \sqrt{\frac{|V_n|}{|V|}} \right) \right\|_F^2
\label{eq:opt_kmeans_graphLevel_v4} \\ 
&:= \widetilde{a}_n(\bm{X}_n,\bm{W}) + \widetilde{b}_n(\bm{X}_n) 
:=  \eta_n(\bm{X}_n,\bm{W}) \nonumber
\end{align}

\vspace{-5mm}
\noindent where 

\vspace{-9mm}
\begin{align}
  \gamma_n &= \left\|\bm{Z}\Delta^{-1}\Delta_n\bo{Z}_n^{T}\right\|_F^2 = \text{tr}\left(\bm{Z}_n\Delta_n\Delta^{-2}\Delta_n\bm{Z}_n^{T}\right) \nonumber \\
           & = \text{tr}\left(\Delta_n^2\Delta^{-2}\right)=\sum_{m=1}^K \frac{|G_{n m}|}{|G_{\cdot m}|}. 
\end{align}
\vspace{-4mm}

Recall that $G_{nk}$ is the set of nodes from $n$ that are in cluster $k$, and $G_{.k}$ is the set of nodes from all graphs in cluster $k$. 
The last inequality \eqref{eq:opt_kmeans_graphLevel_v3} uses the fact that $\bm{Z}$ and $\bm{Z}_n$ are orthogonal matrices, %
we demonstrate why the following inequality holds:
\begin{align*}
\left\| \bm{X}_{n}\bo{W} \!\!-\! \bo{Q}_n^* \right\|_F^2 \gamma_n + \left\|\bo{Q}_n^*\bm{Z}^T\Delta^{-1}\Delta_n\bm{Z}_n - \bo{Q}_n^*\sqrt{\frac{|V_n|}{|V|}} \right\|_F^2 \le \left\| \bm{X}_{n}\bo{W} \!\!-\! \bo{Q}_n^* \right\|_F^2 \gamma_n + \left\| \left|\bo{Q}_n^*\right|\Delta^{-1}\Delta_n + \left|\bo{Q}_n^*\right| \sqrt{\frac{|V_n|}{|V|}} \right\|_F^2   
\end{align*}
First, we drop what is constant on both sides of the inequality first term on both sides of the inequality.
\begin{align}
&\left\|\bo{Q}_n^*\bm{Z}^T\Delta^{-1}\Delta_n\bm{Z}_n - \bo{Q}_n^*\sqrt{\frac{|V_n|}{|V|}} \right\|_F^2  \le \left\| \left|\bo{Q}_n^*\right|\Delta^{-1}\Delta_n + \left|\bo{Q}_n^*\right| \sqrt{\frac{|V_n|}{|V|}} \right\|_F^2   
\label{eq:ineq_imp}
\end{align}
Rewriting LHS of Eq. \eqref{eq:ineq_imp} using trace operator, we have
\begin{align}
&\left\|\bo{Q}_n^*\bm{Z}^T\Delta^{-1}\Delta_n\bm{Z}_n - \bo{Q}_n^*\sqrt{\frac{|V_n|}{|V|}} \right\|_F^2 = tr\left(\Delta_n^2\Delta^{-2}\bm{Z}\bo{Q}_n^{*T}\bo{Q}_n^*\bm{Z}^T\right)  \nonumber\\
&\qquad \qquad- \sqrt{\frac{|V_n|}{|V|}}tr\left(\bm{Z}_n^T\Delta_n\Delta^{-1}\bm{Z}\bo{Q}_n^{*T}\bo{Q}_n^*\right)  - \sqrt{\frac{|V_n|}{|V|}}
tr\left(\bo{Q}_n^{*T}\bo{Q}_n^*\bm{Z}^T\Delta^{-1}\Delta_n\bm{Z}_n\right) +\frac{|V_n|}{|V|}tr\left(\bo{Q}_n^{*T}\bo{Q}_n^*\right)
\label{eq:ineq_lhs}
\end{align}
The RHS is given by 
\begin{align}
&\left\| \left|\bo{Q}_n^*\right|\Delta^{-1}\Delta_n + \left|\bo{Q}_n^*\right| \sqrt{\frac{|V_n|}{|V|}} \right\|_F^2 = \nonumber\\
&=tr\left(\Delta_n^2\Delta^{-2}|\bo{Q}_n^{*T}||\bo{Q}_n^*|\right) + 2\sqrt{\frac{|V_n|}{|V|}} tr\left(\Delta_n\Delta^{-1}|\bo{Q}_n^{*T}||\bo{Q}_n^*|\right)  +\frac{|V_n|}{|V|}tr\left(|\bo{Q}_n^{*T}||\bo{Q}_n^*|\right)
\label{eq:ineq_rhs}
\end{align}
Finally, the inequality in Eq.~\eqref{eq:ineq_imp} holds because the difference between the RHS and the LHS is positive. Given $\bm{Z}$ is orthonormal, we have 
\begin{align*}
2\sqrt{\frac{|V_n|}{|V|}} tr\left(\Delta_n\Delta^{-1}|\bo{Q}_n^{*T}||\bo{Q}_n^*|\right)+ tr\left(\Delta_n^2\Delta^{-2}|\bo{Q}_n^{*T}||\bo{Q}_n^*|\right)  =&2\sqrt{\frac{|V_n|}{|V|}} tr\left(\Delta_n\Delta^{-1}|\bo{Q}_n^{*T}||\bo{Q}_n^*|\right)+ \left\||\bo{Q}_n^*|\Delta_n^1\Delta^{-1}|\right\|_F^2 \\
&\ge tr\left(\Delta_n^2\Delta^{-2}\bm{Z}\bo{Q}_n^{*T}\bo{Q}_n^*\bm{Z}^T\right)
\end{align*}
\hfill  \qquad \qquad \qquad \qquad \qquad \qed 

\subsection{Proof of Lemma \ref{lemma:same_prop_pop}}
\label{subsec:proof_lemma}

\textit{\textbf{Proof}}
For graph $n$, the vector of counts of nodes in each cluster has expected value given by $\mathbb{E} \left[\sum_i^{|V_n|}\bm{X}_{ni}\right] = |V_n|\bm{\zeta}$. 
Assuming the same distribution of the nodes over cluster for all graphs, $\mathbb{E}\left[\sum_n^{N}\sum_i^{|V_n|}\bm{X}_{ni}\right] = |V|\bm{\zeta}$. 
We know that $\sum_i^{|V_n|}\bm{X}_{ni} = \diag(\bm{X}_n^{T}\bm{X}_n) = \diag(\Delta_n^2)$ and $\sum_i^{|V|}\bm{X}_{ni} = \diag(\bm{X}^{T}\bm{X}) = \diag(\Delta^2)$. 
Defining $\alpha_n = |V_n|/|V|$ for all $n \in [1,...,N]$, we have 
$$
\mathbb{E}\left[\Delta_n\Theta\Delta_n\right]=\sqrt{\alpha_n}\mathbb{E}\left[\Delta\Theta\Delta\right]\sqrt{\alpha_n}
 $$
Note that if all graphs have the same size, $|V_n|$, then $\alpha_n=N^{-1}$. Furthermore, using the eigendecomposition on both sides, we have

\vspace{-6mm}
$$
\mathbb{E}\left[\bm{Z}_n \bm{D}_n \bm{Z}_n^{T}\right]=\sqrt{\alpha_n}\mathbb{E}\left[\bm{Z} \bm{D} \bm{Z}^{T}\right]\sqrt{\alpha_n}
$$
Thus,
$
\bm{Z}_n = \bm{Z} \iff \bm{D}_n = \alpha_n\bm{D}
$.

Finally, 
\vspace{-3mm}
\begin{align*}
\mathbb{E}\left[\bm{Z}_n^{T}\Delta_n^{-1}\Delta\bo{Z}\right] &=  \mathbb{E}\left[\bm{Z}^{T}\left(\alpha_n^{-1/2}\Delta^{-1}\right)\Delta\bo{Z}\right] \nonumber \\
&= \mathbb{E}\left[\alpha_n^{-1/2}\bm{Z}^{T}\bm{Z}\right]=\alpha_n^{-1/2}%
\end{align*}
\hfill  \qquad \qquad \qquad \qquad \qquad \qed 

\subsection{Proof of Lemma \ref{lemma:same_obj}}
\label{subsec:proof_lemma_2}

\textit{\textbf{Proof}} By Lemma \ref{lemma:same_prop_pop},\\
\vspace{-4mm}
$$\mathbb{E}[\eta_n(\bm{X}_n,\bm{W})] \propto\left\| \bm{X}_{n}\bo{W} - \bo{Q}_n^* \right\|_F^2\frac{|V_n|K}{|V|}$$ 
where  $\bo{Q}_n^*=\bo{Q}_n\frac{|V|}{|V_n|}=\bm{U}_n\bm{D}\frac{|V_n|}{|V|}\frac{|V|}{|V_n|}=\bm{U}_n\bm{D}$. Given $\frac{|V_n|}{|V|}=\frac{1}{N}$ and dropping all constants across graphs, we have $\mathbb{E}[\eta_n(\bm{X}_n,\bm{W})] \propto \|\bm{X}_{n} \bm{W}_n - \widehat{\bm{U}}_n\|_F^2$
\hfill  \qquad \qquad \qquad \qquad \qquad \qed

\subsection{Toy data example}
\label{subsec:toydata}

As an example to illustrate the effect of $\bm{Z}_n^{T}\Delta_n^{-1}\Delta\bm{Z}$ on the individual eigendecomposition, consider three graphs, where each graph is a village, nodes represent individuals and edges represent relationships between them. Assume that individuals are clustered in four different blocks based on their personalities, which reflects how they form relationships. 
Figure \ref{fig:toydataconnectivity}(left) shows the connectivity matrix based on those personalities. Also, consider that each village has its own distribution of people over the clusters, as shown, for instance, in Figure  \ref{fig:toydataconnectivity} (right). 

\begin{figure}[htbp]
\begin{minipage}{.45\textwidth}
\begin{figure}[H]
  \centering
  \includegraphics[width=\textwidth]{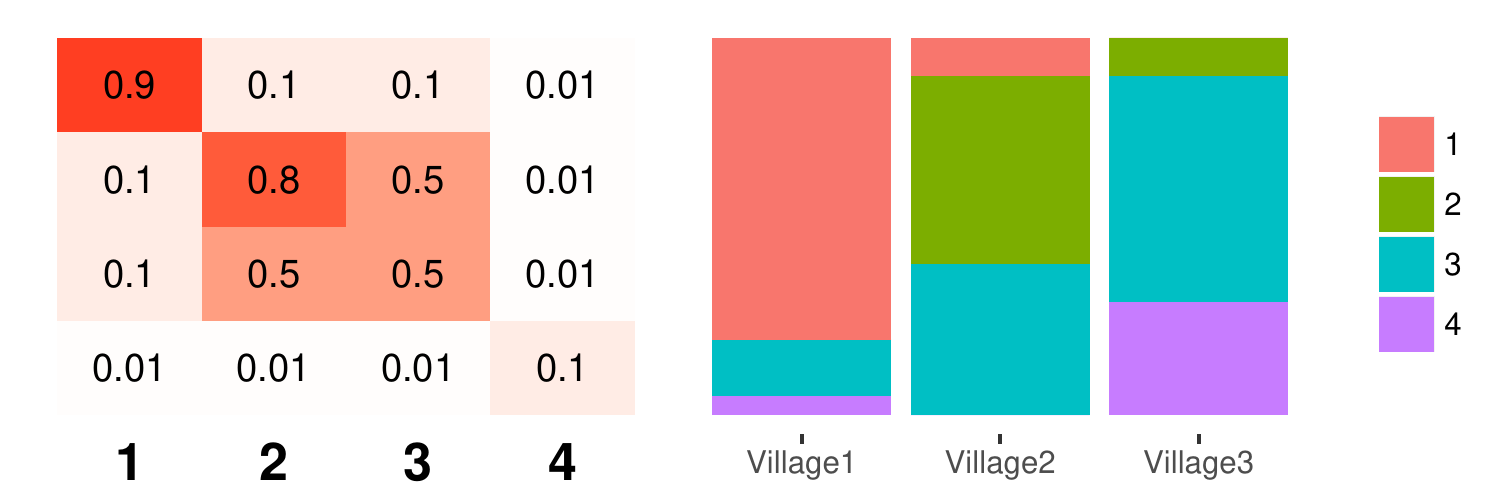}
\vspace{-2mm}
  \caption{Toy data connectivity matrix (left) and distribution of blocks per village (right). 
  }
  \label{fig:toydataconnectivity}
\end{figure}
\end{minipage}
\hspace{5mm}
\begin{minipage}{.45\textwidth}
\begin{figure}[H]
  \centering
  \includegraphics[width=\textwidth]{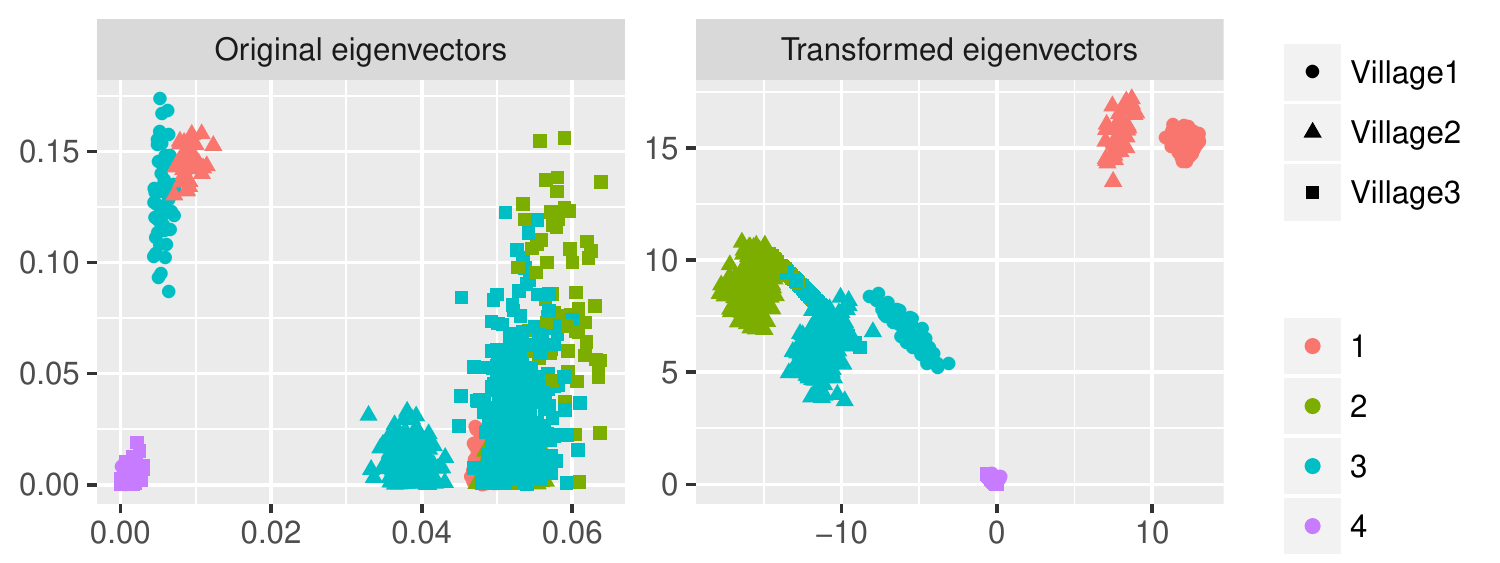}
\vspace{-2mm}
  \caption{Original and transformed eigenvectors}
  \label{fig:spec_all}
\vspace{-4mm}
\end{figure}
\end{minipage}
\end{figure}

Now, using the adjacency matrix eigendecomposition for each village $\bo{A}_n =  \widehat{\bo{U}}_n\widehat{\bo{D}}_n\widehat{\bo{U}}_n^{T}$, we have  $\widehat{\bo{Q}}_n = \widehat{\bo{U}}_n\widehat{\bo{D}}_n$ as a {proxy}  for $\bo{Q}_n$. Since  $\widehat{\bo{U}}_n\widehat{\bo{D}}_n\widehat{\bo{U}}_n^{T}=\widehat{\bo{T}}_n\widehat{\bo{D}}_n\widehat{\bo{T}}_n^{T}$ for $\bm{T}_n = -\bm{U}_n$, we consider $|\widehat{\bo{Q}}_n|$ instead. Figure \ref{fig:spec_all} (left) shows the two largest absolute eigenvectors of each adjacency matrix. Each block has a different center of mass depending on which village (graph) it is in. This is due the fact that villages have completely different distribution of nodes over the blocks. Therefore, sharing information across villages is fundamental not only to assign 
underrepresented nodes to the correct block, but also to map the blocks across villages. Using our proposed transformation given in Equation \eqref{eq:global_local_expression}, we obtain the results shown on Figure \ref{fig:spec_all} (right). We re-scale and rotate the eigenvectors in order to have an embedding of the nodes that is closer to the global eigendecomposition. Therefore, using any clustering algorithm one can correctly recover the membership for nodes across the three villages. 
\subsection{Consistency}
\label{sec:consist}
The global parameter $\bm{\Theta}$ is central in order to understand Multi-graph settings. It not only gives an overall summary of how nodes connect based on their communities, but $\bm{\Theta}$ also has the role to link all graphs. In other words, a reliable estimate of $\bm{\Theta}$ means we can predict edges between nodes in different graphs with confidence. Here, we discuss the asymptotic behavior of $\bm{\Theta}$ in Multi-graph joint SBM as $N\rightarrow \infty$. And, we show that the estimator $\widehat{\bm{\Theta}}=N^{-1}\sum_n\widehat{\bm{\Theta}}_n$ in Eq.~\eqref{eq:theta_estimator_n} converge to the global $\bm{\Theta}$ almost surely for when we know the true membership. Lemma \ref{lemma:consistency_theta} (below) formalizes these statements. 

\begin{lemma}
Let the pair $(\bm{X},\bm{\Theta})$ parametrize a SBM with $K$ communities for $N$ graphs where $\bm{X}$ contains the membership matrix of all graphs stacked and $\bm{\Theta}$ is full rank. Write $\nu \le \min_n |V_n|$. Now if assume $(\widehat{\bm{W}},\widehat{\bm{X}})$ is the optimal solution of Eq.~\ref{eq:opt_kmeans_graphLevel} then $\widehat{\bm{\Theta}}$ converge to $\bm{\Theta}$ in probability Eq.~\eqref{eq:theta_weak_consistent}. If we also assume $\widehat{\bm{X}}_n = \bm{X}_n$ then $\widehat{\bm{\Theta}}$ converge to $\bm{\Theta}$ almost surely Eq.~\eqref{eq:theta_consistent}.
\vspace{-8mm}
\begin{multicols}{2}
\begin{equation}
\lim_{\nu\rightarrow \infty;N\rightarrow \infty} \widehat{\bm{\Theta}} \stackrel{P}{\rightarrow }\bm{\Theta}
\label{eq:theta_weak_consistent} 
 \end{equation}
\begin{equation}
\lim_{N\rightarrow \infty} \widehat{\bm{\Theta}} \stackrel{a.s.}{\rightarrow }\bm{\Theta}
\label{eq:theta_consistent} 
 \end{equation}
\end{multicols}
\label{lemma:consistency_theta}
\end{lemma}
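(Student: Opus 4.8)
The plan is to prove the two assertions separately: the almost-sure statement \eqref{eq:theta_consistent} reduces to a strong law of large numbers for independent (but not identically distributed) matrices, and the in-probability statement \eqref{eq:theta_weak_consistent} then follows by adding, on top of that, control of the extra error incurred when the oracle membership $\bm{X}_n$ is replaced by the spectral estimate $\widehat{\bm{X}}_n$. For \eqref{eq:theta_consistent}, assume $\widehat{\bm{X}}_n=\bm{X}_n$ for all $n$. By Appendix~\ref{subsec:theta_n_details} (Eqs.~\eqref{eq:expected_estimator_n}--\eqref{eq:theta_estimator_n_variance}), each $\widehat{\bm{\Theta}}_n$ is unbiased for $\bm{\Theta}$ with every entry having variance at most $1/4$ (a Bernoulli variance over a positive integer count, using $\min_{n,k}|G_{nk}|\ge 2$, which is also what makes the bias-correction factor in \eqref{eq:theta_estimator_n} well defined). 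Since the graphs are generated independently, $\{\widehat{\bm{\Theta}}_n\}_{n\ge1}$ are independent; fixing an entry $(k,l)$, $\sum_{n\ge1}\V\big((\widehat{\bm{\Theta}}_n)_{kl}\big)/n^2\le\sum_{n\ge1}(1/4)/n^2<\infty$, so Kolmogorov's SLLN gives $N^{-1}\sum_{n=1}^N(\widehat{\bm{\Theta}}_n)_{kl}\convergeAS\theta_{kl}$; a union bound over the finitely many $(k,l)$ yields $\widehat{\bm{\Theta}}\convergeAS\bm{\Theta}$.

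For \eqref{eq:theta_weak_consistent}, write $\widetilde{\bm{\Theta}}_n$ for the oracle estimator \eqref{eq:theta_estimator_n} built from the true $\bm{X}_n$, and decompose
\begin{equation*}
\widehat{\bm{\Theta}}-\bm{\Theta}\;=\;\underbrace{\frac1N\sum_{n=1}^N\big(\widehat{\bm{\Theta}}_n-\widetilde{\bm{\Theta}}_n\big)}_{\text{(I)}}\;+\;\underbrace{\frac1N\sum_{n=1}^N\big(\widetilde{\bm{\Theta}}_n-\bm{\Theta}\big)}_{\text{(II)}}.
\end{equation*}
Term (II) $\convergeAS 0$ as $N\to\infty$ by the argument above, hence also in probability; it remains to show (I) $\convergeP 0$ as $\nu\to\infty$. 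Since $\widehat{\bm{\Theta}}_n$ is a fixed (locally Lipschitz) function of the block counts $\widehat{\bm{X}}_n^{T}\bm{A}_n\widehat{\bm{X}}_n$ and $\widehat{\Delta}_n^2$, the difference $\|\widehat{\bm{\Theta}}_n-\widetilde{\bm{\Theta}}_n\|$ is bounded by a constant multiple of the fraction $r_n$ of nodes of graph $n$ that $\widehat{\bm{X}}_n$ mislabels (after the best permutation of community labels), plus the per-graph sampling fluctuation already handled above.

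To bound $r_n$ I would use that $(\widehat{\bm{W}},\widehat{\bm{X}})$ minimizes \eqref{eq:opt_kmeans_graphLevel}, whose $n$-th block, by \eqref{eq:global_local_expression}, is $\|\bm{X}_n\bm{W}-\bm{Q}_n\bm{Z}_n^{T}\Delta_n^{-1}\Delta\bm{Z}\|_F^2$ with $\bm{Q}_n=\widehat{\bm{U}}_n\widehat{\bm{D}}_n$ a proxy for $\bm{U}_n\bm{D}_n$; because $\bm{Z}_n^{T}\Delta_n^{-1}\Delta\bm{Z}$ is a fixed invertible $K\times K$ matrix, recovering $\bm{X}_n$ from the estimated embedding is, up to a well-conditioned linear map, the same problem as in the single-graph case. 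Invoking the standard spectral-clustering consistency bound (Davis--Kahan together with $\|\bm{A}_n-\bm{P}_n\|=O_P(\sqrt{|V_n|})$, as in \cite{rohe2011spectral,lei2015consistency}) and the identifiability hypotheses ($\bm{\Theta}$ full rank, community fractions bounded away from $0$) gives $r_n\convergeP0$ at a rate depending on $|V_n|$ only; hence there is a deterministic $\epsilon(\nu)\to0$ with $\PR\big(\max_{n\le N}\|\widehat{\bm{\Theta}}_n-\widetilde{\bm{\Theta}}_n\|>\epsilon(\nu)\big)\to0$ uniformly in $N$ whenever all $|V_n|\ge\nu$, so $\|\text{(I)}\|\le N^{-1}\sum_n\|\widehat{\bm{\Theta}}_n-\widetilde{\bm{\Theta}}_n\|\convergeP0$ as $\nu\to\infty$. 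Finally, for any $\delta>0$, $\PR(\|\widehat{\bm{\Theta}}-\bm{\Theta}\|>\delta)\le\PR(\|\text{(I)}\|>\delta/2)+\PR(\|\text{(II)}\|>\delta/2)$; sending $N\to\infty$ kills the second term and then $\nu\to\infty$ kills the first.

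The main obstacle is Step~3: transferring single-graph spectral-clustering consistency to the \emph{joint} optimizer of \eqref{eq:opt_kmeans_graphLevel}, i.e.\ verifying that the unobservable $\bm{Z}_n,\bm{Z},\Delta$ enter only as fixed well-conditioned linear maps so that recovering $\bm{X}_n$ from $\widehat{\bm{U}}_{n*}\widehat{\bm{D}}$ is no harder than from $\widehat{\bm{U}}_n\widehat{\bm{D}}_n$, and making the resulting bound on $r_n$ (hence on $\|\widehat{\bm{\Theta}}_n-\widetilde{\bm{\Theta}}_n\|$) uniform over the possibly infinitely many graphs, so the average in (I) is driven to zero by $\nu\to\infty$ alone. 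A secondary, more bookkeeping issue is to make explicit the regularity assumptions the statement tacitly needs ($\min_{n,k}|G_{nk}|\ge2$, community fractions bounded below, entries of $\bm{\Theta}$ bounded away from $0$ and $1$) and to handle the well-definedness of the bias correction as $|V_n|$ grows.
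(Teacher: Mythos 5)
Your proposal follows essentially the same route as the paper's own (very terse) proof: the almost-sure claim via unbiasedness of each $\widehat{\bm{\Theta}}_n$ (Eq.~\eqref{eq:expected_estimator_n}) plus a Kolmogorov-type strong law for independent, non-identically-distributed summands, and the in-probability claim via single-graph spectral-clustering consistency in the spirit of Theorem~5.2 of \cite{lei2015consistency}. Your writeup is considerably more explicit than the paper's two-line argument --- in particular, the decomposition into terms (I) and (II) and the honest flagging of the step that transfers single-graph consistency to the joint optimizer of Eq.~\eqref{eq:opt_kmeans_graphLevel} address exactly the point the paper glosses over --- but the underlying strategy is the same.
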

%
%
%
%
%
%
%
%
%

%
\vspace{-4mm}
\textit{\textbf{Proof}} Eq.\eqref{eq:theta_weak_consistent} follows directly from the fact that $\bm{A}_n$ converge in probability to $\bm{P}_n$ for large $|V_n|$, Theorem 5.2~\cite{lei2015consistency}. Eq.\eqref{eq:theta_consistent} follows from Eq.~\eqref{eq:expected_estimator_n}, we know that $\mathbb{E}[\widehat{\bm{\Theta}}_n] = \bm{\Theta}$. 
Thus, using Kolmogorov-Khintchine strong law of large numbers,  $\widehat{\bm{\Theta}} \rightarrow \bm{\Theta}$ almost surely for large $N$. 

\vspace{-2mm}
$\hfill \qquad \qquad \qquad \qquad \qed$

Eq.~\eqref{eq:theta_consistent} is only true in the Multi-graph joint case. In the Isolated setting, we need to \textit{re-align} the memberships across graphs which adds an extra layer of complexity. For instance, assume the re-alignment procedure consists on (1) rank each community on each graph based on  $diag(\bm{\Theta}_n)$, then (2) re-order the connectivity matrix and membership accordingly. In this case, $Var(\widehat{\bm{\Theta}}) = \sum_n^N Var(\bm{\Theta}_n) \rightarrow \infty$, unless we assume graph size to be large, i.e. $\nu\rightarrow \infty$ where $ \nu \le \min_n|V_n|$. Nevertheless, this gives weak consistency at most. In fact, this is true for any realignment procedure whose performance is a function of graph size. Figure \ref{fig:syn_theta_dist} in the Synthetic experiments shows that the Joint model estimates $\bm{\Theta}$ well even for small graph settings which is not true for Isolated models.

\subsection{Additional re-alignment procedures and cluster retrieval performance measures}
\label{subsec:realign_measures}
Here, we assess the performance of \isospec   ~using more complex re-alignment procedures (Iso2 and Iso3), we compare them with the one we used in the main body of the paper (Iso1). We also include adjusted rand index (ARI) \citep{rand1971objective}  and misclutering rate (MCR) measures which are often used to evaluate cluster performance when ground truth is known.  Now, we describe the re-alignment procedures:  
\begin{itemize}
    \item Iso1 (used in the main body of the paper)
    \begin{enumerate}
        \item  Rank diag$(\theta_n)$
        \item Re-order $X_n$ and $\theta_n$ accordingly.
    \end{enumerate}
    \item Iso2
    \begin{enumerate}
        \item Cluster the centers $W_n$ across graphs
        \item Re-assign nodes to clusters based on the center of the centers.
    \end{enumerate}
 \item Iso3
 \begin{enumerate}
     \item Search over all permutation to make all $W_n$ closest possible from each other
     \item Re-order $X_n$ and $\theta_n$ accordingly.
 \end{enumerate}
\end{itemize}

We compute ARI, MCR and NMI for different re-alignment procedure for increasing heterogeneity the same synthetic data presented in \ref{fig:syn_results1}(right). In order to make these additional experiments comparable with \ref{fig:syn_results1}(right), we also include results for \model.

\begin{figure}[h]
    \centering
    \includegraphics[width=.95\textwidth]{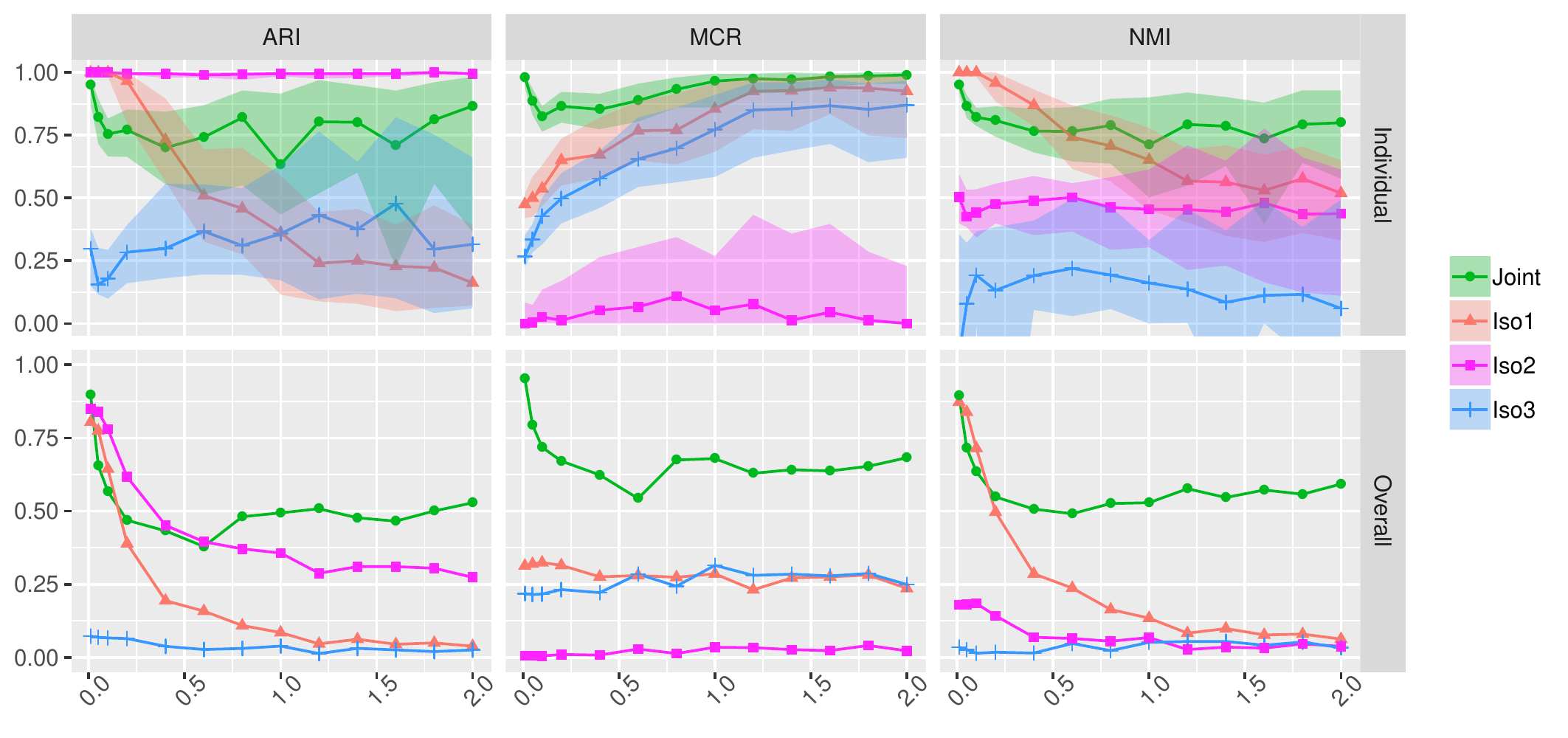}
    \caption{Fixed $N= 100$ and $|V_n|= 200$: Cluster retrieval performance curves for each measure (ARI,MCR and NMI) for each model for increasing heterogeneity ($\alpha$). Top row: median and the interquartile range curves of the individual graphs performance. Bottom row: overall curves.}
    \label{fig:realign_measure}
\end{figure}

Figure \ref{fig:realign_measure} shows that ARI and MCR have similar results of NMI for the Joint and Iso1. In terms of the additional re-alignment procedures, Iso2 seems to outperform all other methods for the Individual case if the focus is on ARI. However, it does not have a good performance using other measures, and the reason is due to the fact that Iso2 tends to accumulate nodes in lower number of clusters, i.e. $<<K$. In terms of the Overall curves, the \model outperforms all the baselines for heterogeneous settings as it was expected. 

\subsection{Synthetic data experiment for varying graphs sizes}
\label{subsec:experiments_3}
Here, we aim to assess cluster retrieval performance in settings where the graphs have different sizes. We used $|V_n|\stackrel{iid}{\sim}NB(\mu,r)$ to sample the size of each graph, where $\mu$ is the mean and $r$ the dispersion parameter. We fixed $\mu=200$ and we vary $r \in [1,10]$. Lower values of $r$ mean more variability in graph size distribution. We also consider two main scenario: 1) homogeneous $\alpha = \frac{1}{K}$; and 2) heterogeneous $\alpha = 1$. Figure~\ref{fig:experiment3} shows the curves for each model in each scenario. For the heterogeneous scenario it is clear that the JointSpec outperform the baselines. Also, NMI curves look flat for increasing $r$ on both scenarios (homogeneous and heterogeneous) which suggests that the distribution of nodes over clusters (controlled by $\alpha$) is more detrimental for cluster retrieval than the size of the graphs. 
\begin{figure}[h]
\begin{minipage}{.45\textwidth}
\begin{figure}[H]
  \centering
  \includegraphics[width=\textwidth]{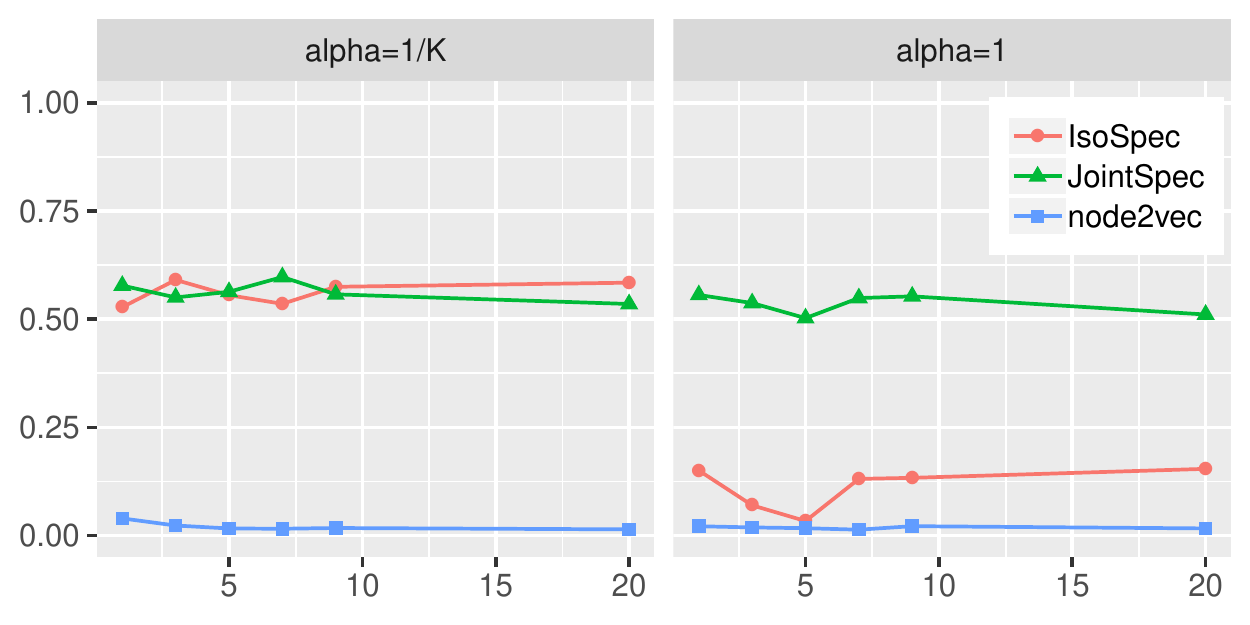}
\vspace{-3mm}
  \caption{NMI curves over $r$ where $r$ is the dispersion parameter in $|V_n|\stackrel{iid}{\sim}NB(\mu,r)$ for homogeneous ($\alpha=1/K$) and heterogeneous ($\alpha=1$) scenarios}
  \label{fig:experiment3}
\end{figure}
\end{minipage}
\hspace{5mm}
\begin{minipage}{.45\textwidth}
\begin{figure}[H]
	\centering
	\includegraphics[width=\textwidth]{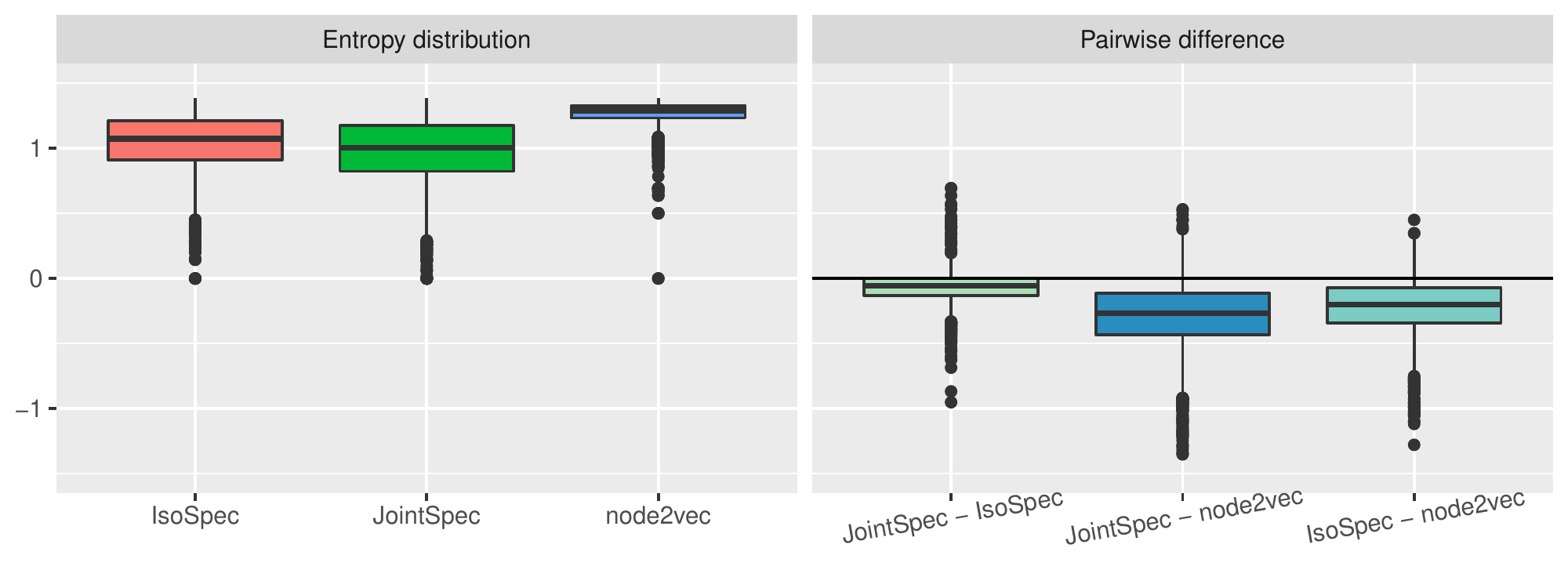}
\vspace{-4mm}
	\caption{Entropy distribution of the words per model (left) and pairwise entropy difference between models for each word (right).}
	\label{fig:twitter_entropy}
\vspace{-4mm}
\end{figure}
\end{minipage}
\end{figure}

\subsection{Assessment of community assignments in Twitter}
\label{subsec:experiments_4}

We also compute the entropy of the community assignment per words across graphs. We expect the community of the words to be consistent across graphs, therefore a lower entropy.
We found that the Joint model had the lowest entropy overall, Figure \ref{fig:twitter_entropy} shows the results. Also, we include a pairwise distribution of the difference of the entropies which shows that JointSpec had the lowest entropy for the majority of the words.

\end{document}